\documentclass[reprint,prl,
superscriptaddress,
 amsmath,amssymb,
 aps,
]{revtex4-2}
\usepackage{graphicx}
\usepackage{titlesec}
\titlespacing{\subsubsection}{2pt}{2.5ex}{2.5ex}
\titlespacing{\subsection}{2pt}{3.5ex}{3.5ex}
\usepackage{dcolumn}
\usepackage{bm}
\usepackage[T1]{fontenc} 
\usepackage[english]{babel} 
\usepackage[utf8]{inputenc}
\usepackage[table,usenames,dvipsnames]{xcolor}
\usepackage{array}
\usepackage{booktabs}
\usepackage{multirow}
\usepackage[psdextra]{hyperref}
\usepackage{bookmark}
\usepackage{amsfonts}
\usepackage{amsmath,amssymb,amsfonts,mathtools,amsthm}
\usepackage{graphicx}
\usepackage{bbm}
\usepackage{tikz}
\usetikzlibrary{quantikz2, calc, positioning}
\usetikzlibrary{arrows.meta, positioning}
\usetikzlibrary{fit, shapes.geometric, backgrounds}
\usepackage{enumitem}
\usepackage[normalem]{ulem}
\newcounter{mycountera}
\newcounter{mycounterb}
\newcounter{mycounterc}
\newcounter{mycounterd}
\newcounter{mycountere}

\renewcommand{\themycounterb}{\alph{mycounterb}}

\renewcommand{\themycountere}{\alph{mycountere}}
\makeatletter
\renewcommand{\p@mycountera}{\thefigure}
\renewcommand{\p@mycounterb}{\thefigure}
\renewcommand{\p@mycounterc}{\thefigure}
\renewcommand{\p@mycounterd}{\thefigure}
\renewcommand{\p@mycountere}{\thefigure}
\makeatletter

\usepackage{hyperref}
\usepackage[most]{tcolorbox}

\newtcbtheorem[auto counter]{algo}%
  {Protocol}{fonttitle=\bfseries\upshape, fontupper=\slshape,
     arc=3mm, colback=white,colframe=vinho,center title}{algo}
\definecolor{vinho}{HTML}{A50021}
\definecolor{vinho}{HTML}{ea801c}
\definecolor{black}{HTML}{000000}
\newtheorem{theorem}{Theorem}
\newtheorem{proposition}{Proposition}
\newtheorem{lemma}{Lemma}
\newtheorem{definition}{Definition}

\begin{document}

\preprint{APS/123-QED}

\title{Quickly extracting fidelity decay rates in random circuit benchmarking experiments}%

\author{Ake Köhne}%
 \email{Corresponding author: ake.koehne@cwi.nl }
\affiliation{CWI \& QuSoft, Amsterdam}%
\author{Stavros Efthymiou}%
\affiliation{Quantum Research Center, Technology Innovation Institute, Abu Dhabi}%
\author{Joanna Majsak}%
\affiliation{Quantum Research Center, Technology Innovation Institute, Abu Dhabi}%
\affiliation{Institute for Theoretical Physics, University of Cologne, Cologne, Germany}
\author{Jadwiga Wilkens}%
\affiliation{Johannes Kepler Universität Linz, Austria}%
\author{Bram Evert}%
\affiliation{Rigetti Computing, 775 Heinz Avenue, Berkeley, California 94710, USA}%
\author{Andrew D. Patterson}%
\affiliation{Rigetti Computing, 775 Heinz Avenue, Berkeley, California 94710, USA}%
\author{Jonas Helsen}%
\affiliation{CWI \& QuSoft, Amsterdam}%
\author{Markus Heinrich}%
\affiliation{Institute for Theoretical Physics, University of Cologne, Cologne, Germany}
\author{Ingo Roth}%
\affiliation{Quantum Research Center, Technology Innovation Institute, Abu Dhabi}%

\date{\today}

\begin{abstract}
Randomized benchmarking (RB) with structured circuits is experimentally scalable, but gives biased short-depth fidelity estimates because incomplete ensemble mixing introduces transient modes. 
Waiting for these transients to decay in deeper circuits can push the signal below the noise floor. This is a well-established obstacle, e.g.\ in linear cross-entropy benchmarking (LinXEB). 
Here we overcome this obstacle solely through classical post-processing.  We introduce a one-parameter family of estimators that interpolate between the standard LinXEB-type post-processing and a new post-processing technique we call the trace filter function.  
Under practical assumptions, the trace filter function provably suppresses transient modes, exposing an interpretable fidelity decay at short depth, with theoretical guarantees comparable to standard RB, at the cost of an increased variance. 
We bound the variance, analyze the evaluation complexity of the new estimators, and develop practical post-processing algorithms.  
Experiments on a Rigetti superconducting processor validate the method for one-dimensional circuits of up to 12 qubits: where the standard estimator requires depth beyond a 40-layer experimental window, the trace filter function yields consistent estimates below 10 layers.  
Trace-filtered randomized benchmarking therefore extends reliable fidelity characterization to relevant structured circuit ensembles and short-depth regimes that are inaccessible to conventional estimators.
\end{abstract}

\maketitle

In the quest to develop large-scale and robust quantum computers, an important role is reserved for protocols aiming to characterize the performance of these devices 
informing 
fabrication, calibration, noise mitigation, error correction and algorithmic resource estimation \cite{eisert2020quantum,kliesch2021theory,hashim2025practical,proctor2025benchmarking}.
 
Many of these characterization methods can be grouped under the class of randomized benchmarking (RB) protocols \cite{helsen2022general,RB1,RB2,RB3,erhard2019characterizing,RB4,RB5,RB6,hines2024fully}, which aim at characterizing the average error rate of a gate set.
This is done by estimating the decay rate of the implementation fidelity of randomly drawn quantum circuits with increasing depth. 
Sample-efficiency and robustness to state preparation and measurement (SPAM) errors make RB protocols particularly attractive and widely used in practice. 

In practice, standard RB with random Clifford unitaries is limited to very few qubits, typically one or two, because of the use of deep circuits.
More scalable benchmarks make use of random quantum circuits \cite{francca2018approximate,brown2018restricted,RB3,RB4,heinrich2022randomized,hines2024fully}.
Most prominently, this includes the linear cross-entropy benchmarking (LinXEB) protocol \cite{boixo2018characterizing,RB4}, originally designed for certifying quantum advantages in sampling tasks, if one considers the decay of the circuit-averaged LinXEB fidelity with circuit depth \cite{helsen2022general,heinrich2022randomized}. 

The assumptions under which RB protocols yield reliable and interpretable benchmarks are well understood \cite{helsen2022general,heinrich2022randomized,ChenDingHuang2022RandomizedBenchmarkingBeyondGroups,figueroaRomero2021nonMarkovian,brillant2025nonMarkovian,dalzell2024random}. For Markovian, time-stationary, and gate-dependent noise, RB signals are expected to be dominated by a single exponential decay whose decay parameter is related to the fidelity of an effective average noise process. This simplification relies on averaging over sequences that explore the relevant state space sufficiently isotropically, so that only isotropic components of the noise remain visible. 
For random-circuit RB, or more generally for non-uniform gate distributions, the single-decay regime is reached after a minimal sequence length, at which the circuit distribution has mixed sufficiently toward the corresponding uniform ensemble.

Before the circuit distribution has mixed, the RB signal contains additional transient components associated with the convergence of the sampled circuit ensemble. Neglecting these components biases single-exponential estimates of the decay rate, an issue already recognized by comparisons of the NIST and uniformly sampled Clifford protocols \cite{boone2019gatesets}, and prominently noted in LinXEB \cite{rinott2022statistical,chen2023linear,liu2021benchmarking,ware2023sharp}.
One can avoid this bias by fitting only sufficiently deep, well-mixed sequences; however, this is often experimentally infeasible because noise may degrade the signal before the mixing regime is reached.

In LinXEB, finite-depth mixing effects are typically mitigated either by engineering circuit topologies and local gate ensembles for fast mixing \cite{boixo2018characterizing,RB4,mi2021scrambling}, or by introducing ad hoc ratio estimators that normalize the measured signal by the corresponding noiseless signal of the same non-uniform circuit ensemble \cite{rinott2022statistical,mark2023ergodic}.
The latter approach can remove bias arising from the ideal circuit distribution, but its status as an RB estimator is less clear: it is not evident that the ratio signal remains governed by a single interpretable decay parameter related to an average fidelity, under realistic gate-dependent noise. 
At the same time, slowly mixing circuit architectures arise naturally in settings where a restricted topology \cite{harrowMehraban2023designs} or entangling pattern is precisely the object one wants to benchmark. A particularly relevant example is one-dimensional brickwork circuits, which appear, for instance, as digital models of local scrambling and many-body dynamics \cite{mi2021scrambling,hartnett2026fast}.

In this work, we address the finite-depth mixing bias by redesigning the estimators rather than changing the circuit ensemble or normalizing by an ideal reference signal. We introduce a one-parameter family of \emph{filter functions} that interpolates between the standard LinXEB-type estimator and a new estimator, called the \emph{trace filter function}. See Figure \ref{fig:fig1_explanation} for an overview of the resulting protocol.

\begin{figure*}
    \centering
    \includegraphics[width=\linewidth]{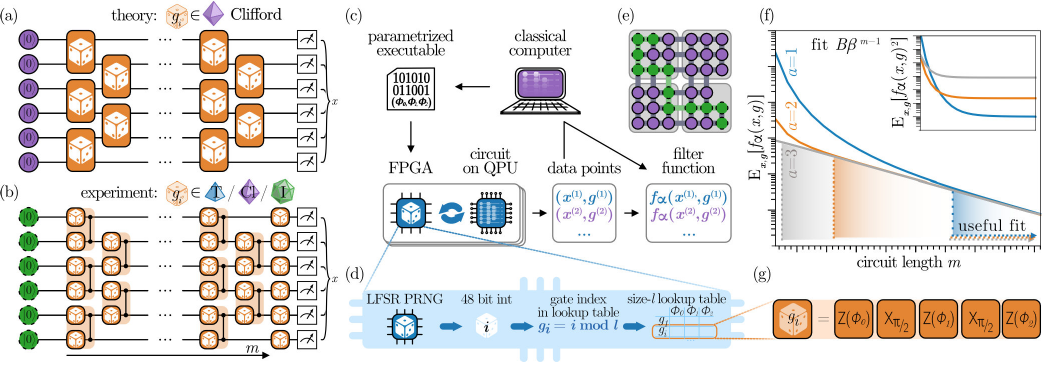}
    \caption{ Overview of setup.
    (a) Circuits used in simulations. Each random circuit instance $g$ is constructed from a brickwork pattern of $m$ layers of 2-qubit Clifford gates.
    (b) Circuits used in experiments. Each 2-qubit gate from (a) is now a layer of 1--qubit gates, each sampled from the  ensemble (tetrahedral, Clifford or icosahedral), followed by a CZ gate. The last layer is followed by a layer of 1--qubit random gates. 
    (c) Steps of the experiment exploiting on-chip randomness.
    The classical computer prepares a program where each random gate is parametrized by three angles $\phi_0, \phi_1, \phi_2$.
    The parametrized executable is uploaded to the FPGA, where the angles are randomly sampled.
    The FPGA sends program instances to the quantum computer (QPU).
    Each instance is run and measured in single-shot mode.
    {Real-time classical control on FPGA represented by thick blue arrows.}
    A data point saved for each random circuit instance includes: the executed random circuit $g^{(i)}$ and the measurement outcome $x^{(i)}$. 
    These data points, from single-shot random circuits per sequence length, are processed on the classical computer.
    The value of the filter function is calculated for each.
    (d)  Randomness on the FPGA.
    A 48-bit linear-feedback-shift-register (LFSR) pseudo-random number generator (PRNG) is used to generate a random integer $i$.
    It is modulo-divided by the size of the lookup table of the target group, which determines the random gate. 
    (e) The layout of the 36-qubit superconducting chip on which the experiments were executed. {The QPU consists of 4 9-qubit chiplets.}
    {One qubit was not operational and is not shown.}
    The 12 (6/10) qubits in linear layout were chosen as a snaking pattern in the device (green circles, dashed border corresponds to the 12-qubit experiment).   
    (f) The theoretical scaling of the first (inset: second) moment of the filter function, for different values of $\alpha$, in lin--log scale.
    A decay parameter $\beta$ is fitted to the data for the first moment. The shaded areas indicate the regimes of $m$ from which a useful fit can be obtained, for different values of $\alpha$.
    (g) Each 1--qubit random gate is compiled as $Z_0X_{\pi/2}Z_1X_{\pi/2}Z_2$, with $X_{\pi/2}$ physical rotation by $\pi/2$, $Z_i$ virtual rotation by $\phi_i$.}
\label{fig:fig1_explanation}
\end{figure*}%
For this trace filter function we rigorously argue that the associated RB experiment has, for time-stationary Markovian gate-dependent noise and practical local-symmetry assumptions, a mixing length of zero, thus enabling the extraction of fidelities from much shorter circuits. Estimation in this short circuit-regime comes at a cost of increased variance, and an interpolating family of functions allows us, by tuning a parameter $\alpha$, to trade off variance for mixing length. We provide analytic bounds on the long-sequence length variance in the noiseless regime for different values of $\alpha$ and corroborate these with numerical evaluations and simulations, establishing the optimality of intermediate filter functions in different parameter regimes. 

We further characterize the classical cost of evaluating these filter functions. Although worst-case evaluation is hard even for Clifford circuits, we give efficient algorithms in the short- and long-depth regimes and a practically scalable stabilizer-based implementation for Clifford circuits. 

We implement our procedure experimentally for different random 1D brickwork circuit ensembles on up to 12 qubits of a Rigetti superconducting device which was optimized for high-throughput circuit sampling. 
We find in the experiments (Fig.~\ref{fig:fig3_rigetti_exp}) that the trace filter function allows one to extract fidelities at sequence lengths substantially below the mixing length, enabling faster and more accurate fidelity characterization of large devices. For instance, at 12 qubits we find that the standard estimator
requires depth beyond a $40$-layer experimental window while the trace filter function yields consistent estimates
below $10$ layers. This implies trace-filtered randomized benchmarking can be used for reliable fidelity characterization in short-depth regimes that are inaccessible to conventional estimators.

    \section{Trace-filtered randomized benchmarking}

A large class of RB protocols can be formulated by defining a probability measure $\nu$ on the group of $n$--qubit unitaries $\operatorname{SU}(2^n)$ and a real-valued function $f$ for the post-processing, called the \emph{filter function} \cite{helsen2022general,heinrich2022randomized}. 
The \emph{experiment} then consists of repeating the following primitives: (i) prepare an initial state $\ket 0$,  (ii) apply a random sequence $g_1, \ldots, g_m$ of $m$ unitaries $g_i$, drawn i.i.d.\ from $\nu$, (iii) measure in the computational basis and record the resulting bit string $x$ and the overall unitary $g = g_m \cdots g_1$.  The primitive is repeated $N$ times for different sequence lengths $m$ yielding multiple tuples $(x^{(l)}, g^{(l)})$ of bit strings and unitaries.  This can be seen as a variant of the standard RB protocol~\cite{RB2} where the global inversion gate is moved to the classical post processing. 

In the \emph{classical post-processing}, we calculate the RB signal as the empirical average of the filter function $f$ for each sequence length $m$
\begin{equation}
    \hat F(m) = \frac1N \sum_{l=1}^N f(x^{(l)}, g^{(l)})\,. 
\end{equation}
Subsequently, fitting an exponential in $m$ to the RB signal allows one to extract a decay parameter $\beta$---the result of the RB protocol.  
For a perfect gate implementation, $\beta$ is $1$ even in the presence of state preparation and measurement errors.  
This means $\epsilon = -\log(\beta) \approx 1 - \beta$ quantifies the infidelity of the gate implementation.  
In certain settings, one can  establish a rigorous interpretation of $\epsilon$ as being proportional to the average gate-infidelity, see e.g.~Ref.~\cite{helsen2022general}. 

The most straightforward choice of measure $\nu$ is the uniform measure over the entire Clifford group. 
However, long sequences of compiled multi--qubit Clifford unitaries lead to circuits with challenging depth on current devices, limiting standard RB to very few qubits.  
For circuit-level benchmarks, one therefore often resorts to defining $\nu$ as a distribution over quantum circuits of bounded depth. For example, we in this paper i.a.\ choose $\nu$ to have uniform support on one-dimensional `brickwork layers' with fixed two-local entangling gates and all combinations of one-local gates from a discrete gate set, see Figure~\ref{fig:fig1_explanation}. 

The most prominent example of a filtered RB protocol with random quantum circuits arises when one performs linear cross-entropy benchmarking (LinXEB) and reports the decay of the LinXEB score with the circuit depth, averaged over multiple circuits.  
Here the filter function is given by 
\begin{equation}
    f_{1}(x,g)= (2^n+1)\left(|\langle x| g |0\rangle|^2 - 2^{-n} \right)\,,\label{eq:linXEB}
\end{equation}
where we used a $2^n+1$ normalization factor compared to $2^n$ in the original proposal. 

In order to regard the decay parameter $\epsilon$ as an infidelity, one requires a robust interpretation, under clear assumptions, justifying the single-exponential model. 
If $\nu$ is the uniform measure, 
the RB signal with the standard filter function $f_1$ is in expectation well-approximated by 
\begin{equation}\label{eq:single-decay}
    \mathbb E[\hat F(m)] = \mathbb E[f_1(m)]\approx 
    B \beta^{m-1}\,, 
\end{equation}
with $B$ a constant, under general non-local, gate-dependent, time-stationary, Markovian noise~\cite[Thm.~16]{helsen2022general}.  
The approximation error in the above equation is due to the gate-dependence of the noise and is exponentially suppressed in the sequence length $m$.
The infidelity in this case formally coincides with the depolarizing strength of an effective gate-independent noise description.  

A perfectly implemented random quantum circuit converges, for long sequences, towards the uniform measure on the group it generates. 
For sufficiently small noise strength, these convergence properties are unaltered.
As a consequence, the RB signal when using a non-uniform measure like a random quantum circuit 
has additional signal components~\cite{heinrich2022randomized}: 
\begin{equation}\label{eq:subdominant-decays}
    \mathbb E[f_1(m)]  \approx B \beta^{m-1} + \sum_i C_i \gamma_i^{m-1}\,,
\end{equation}
that decay in $m$ faster than the dominant component $|\gamma_i| < \beta$, which is typically identified as a fidelity measure of an average gate with respect to the distribution $\nu$.
Thus, there exists a minimal sequence length $m_0$ such that the single decay approximation still holds for $m>m_0$. 
In principle, for sufficiently long sequences in a non-uniform RB protocol, we can thus still estimate the interpretable decay parameter $\beta$. 
Ref.~\cite{heinrich2022randomized} gives rigorous bounds on the signal approximation and minimal sequence length for non-uniform filtered RB under gate-dependent noise, see also Ref.~\cite{ChenDingHuang2022RandomizedBenchmarkingBeyondGroups}.  

However, in practice, long sequence lengths are associated to small signal, requiring comparatively many samples to estimate the decay rate.
Especially for quantum circuits that do not `mix' sufficiently fast, obtaining an unbiased infidelity estimator using the standard LinXEB protocol becomes infeasible as the signal has decayed before the subdominant modes become negligible. 
A practically relevant and particularly challenging example here comes from one-dimensional random circuits, e.g.\ for benchmarking layers of entangling gates with particular geometric restrictions, which have empirically observed large mixing lengths. 

We here overcome this problem of large mixing lengths by introducing novel filter functions \footnote{The original proposal of the trace filter function was sketched in the unpublished manuscript \cite{heinrich2022randomized} by two of the current authors.} that synthetically reduce the mixing length.  At the heart of our construction is the following structural insight: 
Formally, the RB signal with standard filter function is of the form
\begin{equation}\label{eq:RBsignalT}
    \mathbb E[f_1(m)] = \operatorname{Tr}\big[\rho\, T^m(\tilde M)(\tilde \rho)\big],  
\end{equation}
where $T$ is an operator on quantum channels depending on the gate-implementation, $\tilde M$ is the (noisy) read-out channel and $\tilde \rho, \rho$ are the initial state with and without noise, respectively.  
(The operator $T$ can be understood as a noisy channel twirl or a Fourier transform of an implementation map that relates $g$ to its noisy implementation as a quantum channel \cite{merkel2021randomized,heinrich2022randomized,helsen2022general}.) 
The spectrum of $T$ is the set of decay rates of the RB signal.  
For uniform $\nu$, $T$ is, to a good approximation, of unit rank, leading to the single exponential decay, Eq.~\eqref{eq:single-decay}.  
For a random quantum circuit the spectrum of $T$ has many additional components and multiple eigenspaces overlap with $\tilde M$ and $\rho \otimes \tilde \rho$.  

Importantly, the factor $\tilde \rho$ arises from the Born probabilities describing the experiment and cannot be changed in the post-processing.  
Our key observation is that, by making careful use of the symmetries of the random quantum circuit, we can overcome this 
apparent restriction and realize an RB signal in the post-processing that takes the form of a trace-inner product in operator space with some operator $P$:
\begin{equation}\label{eq:RBsignalTrT}
    \mathbb E[\hat F(m)] = \operatorname{Tr}[ P\, T^m(\tilde M) ]\,.  
\end{equation}
Then we choose $P$ to have high overlap with the dominant unit-rank component of $T$ effectively suppressing the subdominant decays, i.e.\ the coefficients $C_i$ in Eq.~\eqref{eq:subdominant-decays}. 

Assuming local Clifford symmetry, we can realize this trace inner product by suitably reweighting symmetry-related measurement outcomes in post-processing. 
To this end, we define the family of filter functions 
\begin{equation}
    f_{\alpha}(x,g)=(2^n+1)\left[\langle x| g\, \xi(\alpha)\, g^\dagger|x\rangle - (1 + \alpha)^{-n}\right]\label{eq:family}\,, 
\end{equation}
with $\alpha \in [1, 3]$ and, using Pauli operators $I$ and $Z$, 
\begin{equation}
    \xi(\alpha) = \frac1 {(1+\alpha)^n} \left (I + \alpha Z \right)^{\otimes n}\,. 
\end{equation}
For $\alpha = 1$, the filter function coincides with the standard filter function $f_1$ of Eq.~\eqref{eq:linXEB}.  
Using the function $f_3$, which we call the trace filter function, the RB signal turns out to take the form of Eq.~\eqref{eq:RBsignalTrT}. 

More precisely, we establish the following result. 
\begin{proposition}[De-biasing\ (informal)]\label{prop: bias} 
If $\nu$ is right-invariant under local Clifford rotations and the experiment admits a time-stationary, Markovian, gate-dependent  description that does not break the local Clifford-symmetry of the $\nu$-channel twirls $T$, 
then 
\begin{equation}
 \mathbb E[f_3(x,g)] \propto \operatorname{Tr}[P T^m(\tilde M)]\,, 
\end{equation}
where $\tilde M$ is the read-out channel under noise and $P$ is the fixed point of a self-adjoint $T$ in the absence of noise. 
\end{proposition}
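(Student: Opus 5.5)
Write the expectation of $f_\alpha$ in the implementation-map formalism. Let $\phi(g)$ be the noisy channel implementing $g$, $\tilde\rho$ the noisy initial state, and $\tilde M$ the noisy read-out map with POVM effects $\tilde M^\dagger(\ket{x}\!\bra{x})$. Then
\[
\mathbb E[f_\alpha] = (2^n+1)\sum_x \int \mathrm d\nu^{*m}(g)\,\Big(\bra{x} g\,\xi(\alpha) g^\dagger\ket{x} - (1+\alpha)^{-n}\Big)\operatorname{Tr}\big[\ket{x}\!\bra{x}\tilde M\phi(g)(\tilde\rho)\big].
\]
The constant term integrates to $(1+\alpha)^{-n}$, since $\sum_x$ of the Born probabilities equals one. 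I then use linearity in $\ket{x}\!\bra{x}$ to rewrite the sum over $x$. Expand $\xi(3)=4^{-n}(I+3Z)^{\otimes n}$ in Paulis: $(I+3Z)/4=\tfrac14 I + \tfrac34 Z$. The key local identity is that averaging $\ket{0}\!\bra{0}$ over the single-qubit Clifford group with a suitable reweighting reproduces this operator. More precisely, $\sum_x \ket{x}\!\bra{x}\otimes \bra{x}\xi(3)\ket{x}$ should be rewritten as an average over local Cliffords $c$ of $c\ket{0}\!\bra{0}c^\dagger\otimes c\,\xi(3)c^\dagger$. The weighting with $\alpha=3$ is exactly what makes this a multiple of the swap operator restricted to the Pauli-nontrivial sector on each qubit, i.e.\ $\sum_P P\otimes P$ per site. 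This gives $\operatorname{Tr}(\cdot\,\cdot)$ up to identity corrections, and those corrections cancel against the $(1+\alpha)^{-n}$ subtraction.

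Second, I invoke the symmetry hypotheses. Right-invariance of $\nu$ under local Cliffords $C_{\mathrm{loc}}$ lets me replace $g$ by $gc$ with $c$ uniform in $C_{\mathrm{loc}}$. The assumption that the noise does not break local Clifford symmetry of the twirl lets me push $c$ into the state, $\phi(gc)\approx\phi(g)\,c$, at the level of the twirl $T$. The averaged expression then contains $\mathbb E_c\, c\,\xi(3)c^\dagger\otimes c\,\tilde\rho\, c^\dagger$. The local Clifford group is a unitary 2-design per qubit, so this equals, per qubit, a linear combination of $I\otimes I$ and the swap. The $\alpha=3$ choice is tuned so that the coefficients, after the subtraction and the factor $(2^n+1)$, reduce $\tilde\rho$ to its trace (which is $1$). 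This is where the dependence on $\tilde\rho$ disappears. The signal becomes $\operatorname{Tr}[P\,T^m(\tilde M)]$, where $P$ collects the $\nu$-twirled $g\otimes g^\dagger$ structure. Concretely, $P$ is the vectorized swap (maximally entangled operator) minus the identity part, normalized by $(2^n+1)$, which is indeed the projector onto the unit-rank fixed point of the noiseless self-adjoint $T$ (the 2-design/commutant component).

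Third, I identify $P$. In the noiseless case $T=\int g\otimes \bar g\otimes g\otimes\bar g$ (or its channel-space analogue) is self-adjoint for a symmetric $\nu$. Its fixed space, for $\nu$ generating the Clifford group, is spanned by $\mathbb 1$ and the swap. Removing the trivial component leaves the normalized swap, which matches $P$ above. Proportionality constants are tracked but irrelevant to the statement.

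The main obstacle is step two. There I must make precise what ``does not break the local Clifford symmetry of $T$'' means for gate-dependent noise, so that the right multiplication by $c$ can legitimately be commuted through $\tilde M\phi$ and absorbed into the state side. I would first formulate it as an assumption $T\circ\mathcal R_c = \mathcal R_c\circ T$ for the right-action superoperator $\mathcal R_c$, possibly only approximately. I would then verify that the per-qubit 2-design computation with $\alpha=3$ yields exactly $\operatorname{Tr}\tilde\rho=1$, with no residual $\tilde\rho$-dependent traceless term.
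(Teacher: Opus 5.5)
The genuine gap is in your second step. There you run the local-Clifford twirl on $\xi(3)\otimes\tilde\rho$ with a \emph{noisy} prepared state and assert that the choice $\alpha=3$ reduces $\tilde\rho$ to its trace. That is false, and the verification you defer to the end would fail.

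On a single qubit, with $r=\operatorname{Tr}(Z\tilde\rho)$, the twirl of $\xi_1(3)\otimes\tilde\rho$ is $\tfrac18\bigl(I\otimes I+r\sum_{P\in\{X,Y,Z\}}P\otimes P\bigr)$. This is a multiple of the swap only when $r=1$. In general, for any $\mathcal X$ commuting with all local Clifford channels,
\[
(\xi(3)|\mathcal X|\tilde\rho)=4^{-n}2^{-n}\sum_{z\in\mathbb F_2^n}\operatorname{Tr}(Z^z\tilde\rho)\sum_{a:\,\mathrm{supp}(a)=\mathrm{supp}(z)}(w(a)|\mathcal X|w(a)).
\]
This is a superoperator trace reweighted, support sector by support sector, by the $Z$-expectations of the prepared state. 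It equals $4^{-n}\operatorname{Tr}\mathcal X$ for all such $\mathcal X$ if and only if $\operatorname{Tr}(Z^z\tilde\rho)=1$ for every $z$, i.e.\ $\tilde\rho=|0\rangle\!\langle0|$. If these expectations coincide for all $z\neq0$ (e.g.\ global depolarizing preparation noise), the result is merely rescaled. Otherwise the sectors of $P\,T^m(\tilde M)$ enter with unequal weights, and the signal is in general not proportional to $\operatorname{Tr}[P\,T^m(\tilde M)]$.

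So $\alpha=3$ is calibrated to the ideal input and does not wash out preparation errors. The missing hypothesis is perfect state preparation, which the paper's formal version assumes explicitly: noise is allowed in $\phi$ and in the read-out $\tilde M$, but the input is exactly $|0)$.

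With that hypothesis your argument goes through, and its core is a genuinely different proof of the paper's trace lemma. Your per-qubit identity is $\mathbb E_c\,c\,\xi_1(3)c^\dagger\otimes c|0\rangle\!\langle0|c^\dagger=\tfrac14\,\mathrm{SWAP}$. This is the full swap, identity term included: the twirl of $Z\otimes Z$ is $\tfrac13\sum_{P\neq I}P\otimes P$, and $\alpha=3$ cancels the $\tfrac13$. In superoperator form it reads $\mathbb E_c\,\omega(c)|0)(\xi(3)|\omega(c)^\dagger=4^{-n}\,\mathrm{id}$, whence $(\xi(3)|\mathcal X|0)=4^{-n}\operatorname{Tr}\mathcal X$ for covariant $\mathcal X$. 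The paper instead uses that such $\mathcal X$ is Pauli-diagonal, with eigenvalues constant on the local-Clifford orbits of $Z^z$, which have size $3^{|z|}$. The content is the same, and your version makes the role of $\alpha=3$ more transparent.

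The remaining steps are simpler than you plan:
\begin{itemize}
\item The projector $P=I-2^{-n}|\mathbb I)(\mathbb I|$ comes from the constant subtraction, $f_\alpha(x,g)=(2^n+1)(E_x|P\,\omega(g)|\xi(\alpha))$, not from any twirl. Since $P$ commutes with unitary channels, absorbing the sum over $x$ into $\tilde M$ gives directly $\mathbb E[f_3]=(2^n+1)(\xi(3)|P\,T^m(\tilde M)|0)$. Your first-step rewriting of the outcome sum is therefore not needed.
\item The symmetry hypothesis required is only that the single superoperator $T^m(\tilde M)$ commute with every $\omega(c)$, $c\in\operatorname{Cl}_1^{\times n}$, which is exactly what the paper assumes. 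One sufficient condition is right-invariance of $\nu$ together with $\phi(g_1c)=\phi(g_1)\omega(c)$, which makes the image of $T$ covariant.
\end{itemize}
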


The formal statement and proof are given in Appendix~\ref{appendix:signal}.

The expected $f_1$ signal exhibits a mixing phase dependent on sequence length $m$ even in the absence of noise.
The above stated proposition implies that when the gate-implementation is perfect the expected RB signal for the trace filter function $f_3$ becomes proportional to $\operatorname{Tr}[P\tilde M]$, independent of $m$. 
Hence, for the trace filter function, the subdominant decays are perfectly suppressed.  
Our method relies on the expectation that this suppression 
is robust under noise.  
The underlying assumption that 
the noise does not break the local Clifford symmetry of $\nu$-twirls is rather weak and often justifiable in practice.  
For example, the assumption holds if imperfections in  
single-qubit rotations can be neglected compared to the implementation of entangling layers. Hence, performing trace filtered RB leads to an unbiased estimate for the decay rate even in the limit of short, i.e.\ shorter than the mixing length, sequence lengths. 

Although the subleading signal components introduced by mixing are suppressed by using $f_3$, this generally comes at the price of a decrease in signal-to-noise ratio in the individual signal points. 
To demonstrate this, we analytically calculate the variance of $f_\alpha$ when $(x, g)$ are obtained in a noiseless experiment using uniformly random global Clifford unitaries, corresponding to the long-sequence length limit in the brickwork experiment, in Appendix~\ref{appendix:varnonoise}. 
\begin{proposition}\label{prop:var_no_noise}
Let $g$ be drawn uniformly at random from the $n$--qubit Clifford group and $x$ drawn from $p(x|g) = |\langle  x | g | 0 \rangle|^2$. Then 
\begin{equation}\label{eq:variance}
    \operatorname{Var}[f_\alpha(x,g)] \in \mathcal{O}\left(
    \left[1 +  
    \left(\frac{\alpha - 1}{\alpha +1}\right)^2\right]^n\right)\,.
\end{equation}
\end{proposition}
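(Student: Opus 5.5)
We bound the variance by the second moment, $\operatorname{Var}[f_\alpha]\le \mathbb E[f_\alpha^2]$. Since $f_\alpha=(2^n+1)[\langle x|g\xi g^\dagger|x\rangle-(1+\alpha)^{-n}]$, the relevant quantity is
\begin{equation*}
\mathbb E\big[\langle x|g\xi g^\dagger|x\rangle^2\big]
=\sum_x \mathbb E_g\Big[\operatorname{Tr}\big[(g|0\rangle\langle 0|g^\dagger)\,|x\rangle\langle x|\big]\,\operatorname{Tr}\big[(g\xi g^\dagger)\,|x\rangle\langle x|\big]^2\Big],
\end{equation*}
which is a degree-3 moment in $g$. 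By Clifford invariance of the measure we can replace $|x\rangle$ by $|0\rangle$ and pull the factor $2^n$ out of the sum over $x$. This leaves $2^n\,\mathbb E_g\operatorname{Tr}\big[g^{\otimes 3}(|0\rangle\langle0|\otimes\xi\otimes\xi)g^{\dagger\otimes 3}\,|0\rangle\langle0|^{\otimes 3}\big]$. The Clifford group is a unitary 3-design, so we can evaluate this with the Haar three-copy twirl. The projection onto the symmetric subspace gives
\begin{equation*}
\mathbb E = \frac{2^n}{D_3}\operatorname{Tr}\big[\Pi_{\mathrm{sym}}(|0\rangle\langle0|\otimes\xi\otimes\xi)\big],
\end{equation*}
with $D_3=\binom{2^n+2}{3}$.

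Next we expand $\Pi_{\mathrm{sym}}=\frac1{6}\sum_{\pi\in S_3}W_\pi$ and compute each permutation trace. The identity gives $(\operatorname{Tr}\xi)^2=1$. The transpositions give $\operatorname{Tr}\xi^2$, or $\langle0|\xi|0\rangle$ when the first slot is involved. The 3-cycles give $\langle 0|\xi^2|0\rangle$. All of these factorize over qubits, since $\xi$ is a product operator:
\begin{equation*}
\operatorname{Tr}\xi^2=\Big(\frac{2(1+\alpha^2)}{(1+\alpha)^2}\Big)^n,\qquad \langle0|\xi|0\rangle=1,\qquad \langle0|\xi^2|0\rangle=1.
\end{equation*}
Using the normalization $\xi=(I+\alpha Z)^{\otimes n}/(1+\alpha)^n$, we get $\langle0|\xi|0\rangle=((1+\alpha)/(1+\alpha))^n=1$.

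We subtract the squared mean. The mean is computed in the same way from the second moment (a 2-design suffices): $\mathbb E\langle x|g\xi g^\dagger|x\rangle=\frac{2^n}{2^n(2^n+1)}(1+\operatorname{Tr}\xi\cdot 1)$, up to the exact bookkeeping. Multiplying by $(2^n+1)^2$ and using $D_3\sim 4^{n}\cdot 2^n/6$, every term in the bracket that is $\mathcal O(1)$ contributes $\mathcal O(1)$ to the variance. The only term that grows is the $\operatorname{Tr}\xi^2$ transposition, which contributes
\begin{equation*}
\mathcal O\big(2^{-n}\cdot 2^n\operatorname{Tr}\xi^2\big)=\mathcal O\Big(\Big(\tfrac{2(1+\alpha^2)}{(1+\alpha)^2}\Big)^n\Big).
\end{equation*}
Finally, we note that $\frac{2(1+\alpha^2)}{(1+\alpha)^2}=1+\big(\frac{\alpha-1}{\alpha+1}\big)^2$, which is exactly the bound in Eq.~\eqref{eq:variance}.

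The main obstacle is the careful bookkeeping of the prefactors in the $S_3$ expansion. We must confirm that the large terms cancel against the squared mean, or are otherwise $\mathcal O(1)$, for all $\alpha\in[1,3]$. In particular, the terms involving $(\operatorname{Tr}\xi)^2$ and $\langle0|\xi|0\rangle$ scale like $(2^n)^3/D_3\cdot 1=\mathcal O(1)$, and the subtraction of $(1+\alpha)^{-n}$ only shifts the mean. As a sanity check, for $\alpha=1$ the bound should reproduce the known $\mathcal O(1)$ LinXEB variance. If the 3-design shortcut were problematic, we would fall back on the explicit third-moment formula for the Clifford group, but the Haar computation should suffice here.
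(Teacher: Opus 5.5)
Your proposal is correct and follows essentially the same route as the paper: reduce to a third moment, replace the Clifford twirl by $\Pi_{\mathrm{sym}}/D_3$ via the 3-design property (the paper's Weingarten formula with $\prod_i (2^n+i)^{-1}$), expand over $S_3$, and identify the $\operatorname{Tr}\xi^2=[1+(\frac{\alpha-1}{\alpha+1})^2]^n$ transposition as the only growing term. One slip does not affect the bound: the identity term is $(\operatorname{Tr}\xi)^2=(\frac{2}{1+\alpha})^{2n}$ rather than $1$, and the transpositions touching the first slot give $\langle 0|\xi|0\rangle\operatorname{Tr}\xi=(\frac{2}{1+\alpha})^{n}$, both at most $1$ for $\alpha\ge 1$; also, since the variance is shift invariant you can bound it directly by $(2^n+1)^2\,\mathbb{E}[\langle x|g\xi g^\dagger|x\rangle^2]$, so no cancellation against the mean is needed.
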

Note that the variance is quadratic in $\alpha-1$ near the standard filter function, but exponential in $n$ for every fixed $\alpha > 1$. Over $\alpha \in [1,3]$, the exponential base ranges from $1$ to $5/4$. 
This observation motivates introducing the family of filter functions parametrized by $\alpha$.  The parameter $\alpha$ allows one to trade off variance for bias in the signal. 
Picking $\alpha$ closer to $\alpha=3$ decreases the mixing length of the RB signal at the price of increasing the variance. 

We can also model the variance in the short sequence length limit, by analytically calculating the variance for a single layer of uniformly random local Cliffords. We performed this calculation for $\alpha=1$ and $\alpha=3$ in Appendix~\ref{appendix:varnonoise}. Curiously the variance inverts in this limit, being largest for $\alpha=1$. However this is an artifact of the fact that for $\alpha<3$ the signal at this sequence length is exponentially large. Computing the signal-to-noise ratio restores the expected monotonicity in $\alpha$.

As we exemplify in numerical simulations below, given the parameters of the experiment and noise levels, this bias-variance trade-off leads to an  optimal choice of $\alpha$ that needn't be $\alpha=1$ or $\alpha=3$.

A further trade-off of the filter function is a potential increase in the complexity of the classical post-processing.  
In LinXEB experiments typical gate sets generate classically intractable sampling problems, accordingly also the evaluation of the filter function $f_1$ is not efficient and requires sizable computational resources in practice.  
In a noise benchmarking setting also experiments with classically efficiently simulable circuits, e.g. Clifford circuits, are well-motivated.  
When using the Clifford group the standard filter function $f_1(x,g)$ is computable exactly in polynomial time in $n$ using the stabilizer phase-space formalism. 

For $\alpha > 1$ we show in Appendix~\ref{appendix:filters} that even for Clifford circuits evaluating the filter function takes exponential time in $n$ in the worst case, as the problem contains the weight-enumerator problem for binary linear codes, implying $\#P$-hardness. 
However, this worst-case hardness is mitigated in important regimes. For uniformly random global Clifford unitaries, we find that direct evaluation in expectation takes time $O(1)$, making the protocol efficient in the long sequence length limit. 

In the short sequence length limit we find that the weight-enumerator problem typically involves exponentially many terms in expectation for random single-qubit Clifford rotations.  
In this setting however, the filter function factorizes and can be evaluated in $O(n)$ time regardless of the number of terms.  More generally, for geometrically local one-dimensional circuits of $O(\log(n))$ depth a tensor network algorithm is efficient. 
These two pieces of evidence suggest that a formally efficient algorithm might exist for short sequence lengths.
We find that the $O(n^2)$ memory complexity of evaluating the trace filter function and the straightforward parallelization of the evaluations of different shots allow for a sufficiently scalable implementation for Clifford unitaries seen in practice using the stabilizer formulation of Appendix~\ref{appendix:filters}, even without resorting to tensor networks. We implemented the algorithm in python using the Stim and Galois packages \cite{gidney2021stim,Hostetter_Galois_2020}. For non-Clifford unitaries, we performed statevector simulations using qsim \cite{qsim}.

\section{Numerical evaluation of moments}
\begin{figure*}[t]
    \centering
    \setcounter{mycountere}{0}
    \begin{minipage}[b]{4.45cm}
        \centering
        \includegraphics[width=\linewidth]{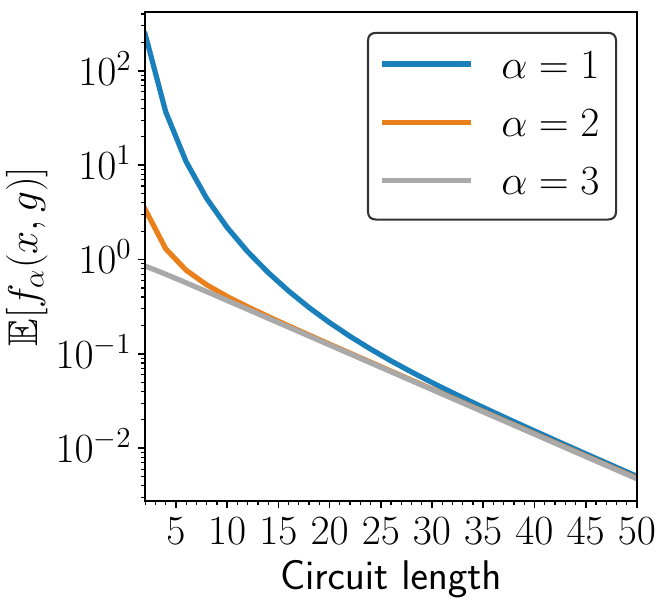}
        \refstepcounter{mycountere}\label{fig:figure2-signal}
        \footnotesize(\themycountere)
    \end{minipage}\hfill
    \begin{minipage}[b]{4.45cm}
        \centering
        \includegraphics[width=\linewidth]{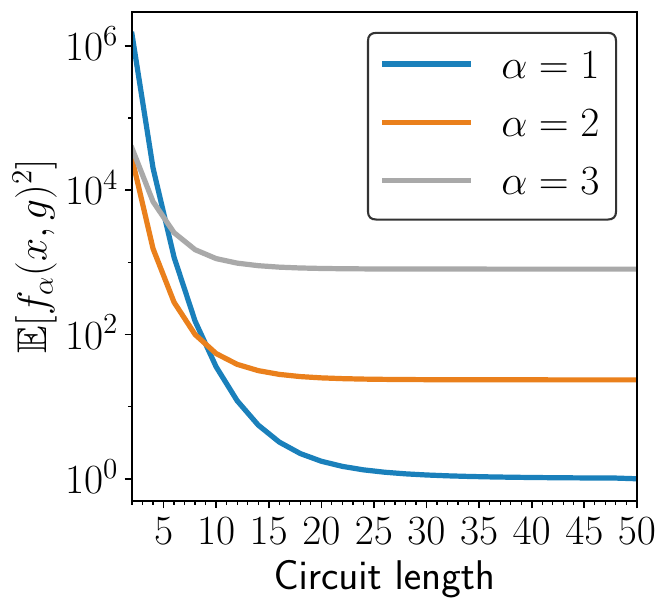}
        \refstepcounter{mycountere}\label{fig:figure2-second-moment}
        \footnotesize(\themycountere)
    \end{minipage}\hfill
    \begin{minipage}[b]{4.45cm}
        \centering
        \includegraphics[width=\linewidth]{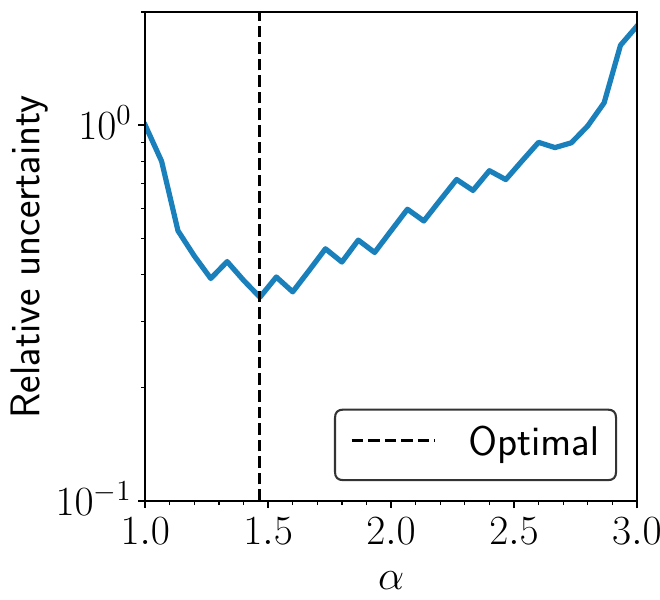}
        \refstepcounter{mycountere}\label{fig:figure2-uncertainty}
        \footnotesize(\themycountere)
    \end{minipage}\hfill
    \begin{minipage}[b]{4.45cm}
        \centering
        \includegraphics[width=\linewidth]{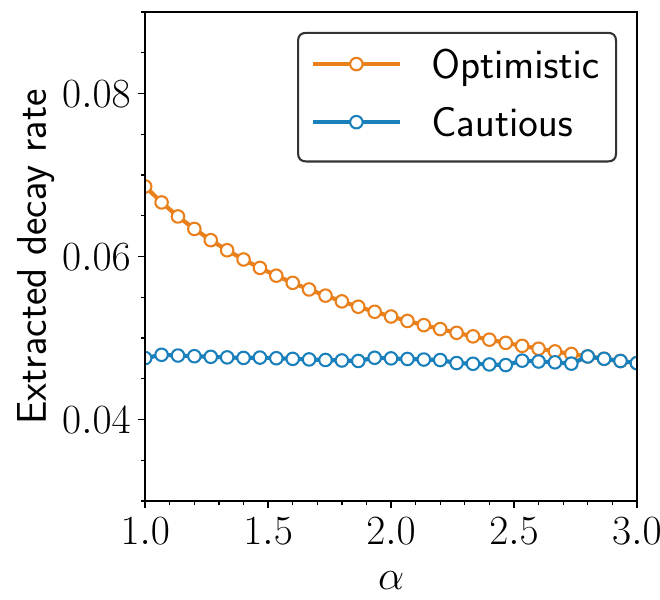}
        \refstepcounter{mycountere}\label{fig:figure2-decays}
        \footnotesize(\themycountere)
    \end{minipage}
    \caption{
    Numerical evaluation of filtered RB for an open 1D brickwork circuit on 30 qubits, with local two-qubit Clifford gates interspersed with identical single-qubit depolarizing noise. The moments are evaluated exactly using local applications of Weingarten calculus and tensor-network contractions, as detailed in Appendix~\ref{appendix:signal_sim}. (a) Filtered RB signals, showing the reduction of the transient regime with increasing $\alpha$ and the nearly single-exponential decay of the trace filter $\alpha=3$. (b) Corresponding second moments. (c) Uncertainty of the decay-parameter estimate, normalized by the uncertainty for $\alpha=1$, showing an optimal value of around $\alpha=1.5$ for this setting. (d) Decay values extracted using the cautious and optimistic fitting strategies described in the main text, using data from the 6-qubit IBM Guadalupe emulator (Qiskit).
    }
    \label{fig:figure2}
\end{figure*}

In the previous section, we stated analytic expressions of the variance for a single Clifford layer and global random Clifford unitaries, the limits of short and deep circuits, respectively. 
Deriving analytical expressions for the intermediate regime is more involved.  
We therefore resort to numerical evaluations of the formulas.  
As explained in Appendix~\ref{appendix:signal_sim}, the moments of $f_\alpha$ for random circuits can be explicitly evaluated using local applications of Weingarten calculus and tensor network contractions.  

We will consider an open one-dimensional brickwork circuit on $30$ qubits (see Figure~\ref{fig:fig1_explanation}a), where each brick is uniformly sampled from the $2$-qubit Clifford group.  
We use a gate-independent noise channel acting  after every $2$--qubit gate. 
The noise channel is simply the product of single--qubit depolarizing noise with depolarization probability $p=5\cdot 10^{-3}$ of identical strength for all qubits and layers.
Figures \ref{fig:figure2}a and \ref{fig:figure2}b show the first moment (bias) and second moment (variance) for the filter functions for different values of $\alpha$. 

We observe that the RB signal with $\alpha<3$ deviates significantly from a single-exponential decay for circuit lengths of up to $30$ layers.  
With increasing $\alpha$ the deviation is reduced, in accordance with our expectation. 
The trace filter function $\alpha=3$ has a mixing length indistinguishable from zero, numerically validating the analytical statements in the previous section. 
The variances decay to asymptotically constant values for deeper circuits for all $\alpha$. Worthy of note here is that in the short sequence length regime the variance no longer increases uniformly with $\alpha$. In fact, for minimal circuit depth the relation reverses. This is not a surprise, as the signal size at short sequence lengths increases sharply with decreasing $\alpha$ (as seen in Figure \ref{fig:figure2}a). Computing the associated signal-to-noise ratio (not shown) restores the long sequence assessment that increasing $\alpha$ increases the sample complexity. These findings are also in agreement with the analytical findings for noiseless local Cliffords (which can be seen as the short sequence length limit) discussed in Appendix \ref{appendix:varnonoise}.

As we noted before, there is generally a trade-off between mixing length and the variance of the distribution being sampled from at each sequence length. This suggests there should be an optimal choice of $\alpha$ depending on the experiment's parameters, in particular the noise rates, circuit architecture and number of measurements, that minimizes the uncertainty in the extracted decay rate. Using the simulated data of \ref{fig:figure2}a and \ref{fig:figure2}b, which have no sampling error, we can perform a synthetic fitting procedure in the following way. We begin by calculating the mixing length for a fixed $\alpha$ by searching for the first sequence length where the $\alpha$ decay curve coincides (up to a negligible error) with the $\alpha=3$ curve. Then, considering only the points beyond this mixing length (a common practice, which we term "cautious fitting" and discuss further below), we can calculate the fundamental uncertainty in an estimator of the decay rate. In particular we consider a weighted version of the method detailed in section IV.A of Ref.~\cite{ep}. We can give an indication of the fitting uncertainty by linearizing the associated least-squares problem around the (known) solution and calculating fit uncertainty from the covariance.

We show the fitting uncertainty relative to $\alpha=1$ in Figure \ref{fig:figure2}c. 
This relative uncertainty is generally independent of the number of samples, so should give a good indication of the behavior of actual experiments.
For this particular setting, we find an optimal choice of $\alpha=1.5$ in-between the standard filter function and the trace filter function. 
The optimal filter function reduces the uncertainty by a factor of $4$,  compared to the standard filter function. 
Note that an a priori closed-form prediction of the optimal value of $\alpha$ is difficult, and becomes even harder when the signal itself is noisy (see below). As a rule of thumb we find that the higher the noise rates, the closer the optimal $\alpha$ is to $\alpha=3$ in our setting. 

\section{Cautious decay rate extraction in practice.}
In practice, determining the minimal mixing length from an RB signal with finite statistics is a nontrivial task. The noise-free mixing length, which can often be calculated exactly, can differ substantially from the mixing time in noisy conditions (see Figure \ref{fig:mix}a for an example). Hence the mixing length cannot be determined a priori, but must be estimated from experimental data.  
Appendix~\ref{appendix:fit} introduces three different methods for empirically estimating the mixing length based on bounding (i) the $\chi^2$ value of the residual statistics, (ii) the number of nontrivial eigenvalues of the autocorrelation matrix (singular values of the Hankel matrix), and (iii) approximating the experimental data with a tensor network heuristic using a simple noise model.  
For (iii) we determine the parameter of a local depolarizing noise channel by minimizing the $\chi^2$ distance of the evolution to the experimental data, and then determine a mixing length from the Hankel matrix (as in ii) but here using the tensor network evaluation without statistical noise. This can be thought of as a heuristic denoising step.
Among the mixing length heuristics considered, we found the tensor-network-based method to be the most reliable (even though it has the least formal justification). 
We use it to restrict the data to an estimated single-exponential regime before extracting the decay parameter, and refer to this procedure as \emph{cautious fitting}. 
In contrast, we call a single-exponential regression over the full data set, including the shortest circuit depths, \emph{optimistic fitting}. 

We illustrate that optimistic fitting indeed yields a (substantial) bias in the fitted decay rate decreasing with $\alpha$ with the following numerical simulation. For simplicity we use the Qiskit IBM Guadalupe simulator that is supposed to be a digital twin of a QPU with a built-in noise model unknown to us \cite{Qiskit}. We simulate the same brickwork circuit architecture of $2$--qubit Cliffords as before on 6 qubits using $1.2 \cdot 10^5$ circuits per length in $\{1, 2, \ldots, 50\}$ executed $100$ times each. 
The RB signal and second moments show the same behavior as the exact evaluation before (see Figure~\ref{fig:wide2} in the appendix). 
Figure~\ref{fig:figure2}d shows the extracted infidelities $\varepsilon$ with different values of $\alpha$ using the `cautious'
 and `optimistic' approaches.  
The standard filter function $\alpha=1$ leads to substantial bias in the optimistic estimate compared to the `cautious' method. The bias decreases for larger $\alpha$ and vanishes close to $\alpha=3$. 
The infidelities extracted from `cautious fits' are comparable for all values of $\alpha$.  
The setting underlines the importance of care when choosing the mixing length already for small system sizes. 
Note that using the abundant number of circuits and measurement statistics available in the small-scale simulation it is still possible to extract accurate and precise infidelity estimates for all values of $\alpha$.  
As we will see in the experiments, this changes for higher noise levels, larger system sizes and limitations in measurement statistics. Here already the reliable determination of the mixing length becomes challenging.

\providecommand{\panelplot}[5]{%
    \begin{tikzpicture}[baseline=(im.south)]
        \node[inner sep=0pt] (im) {\includegraphics[width=#1]{#2}};
        \node[anchor=#4, font=\bfseries, inner sep=1pt,
              fill=white, fill opacity=0.75, text opacity=1]
            at ([shift={#5}]im.#4) {#3};
    \end{tikzpicture}%
}

\newlength{\ensiconheight}\setlength{\ensiconheight}{4mm}
\providecommand{\ensicon}[1]{%
    \raisebox{-0.22\height}{\includegraphics[height=\ensiconheight,
        trim=#1,clip]{rb_paper_jm_v4.pdf}}%
}
\providecommand{\iconCl}{\ensicon{113.8 76.9 379.6 84.8}}   
\providecommand{\iconIco}{\ensicon{129.2 76.9 362.6 84.8}}  

\newlength{\panelcolheight}
\settoheight{\panelcolheight}{%
    \includegraphics[width=5cm]{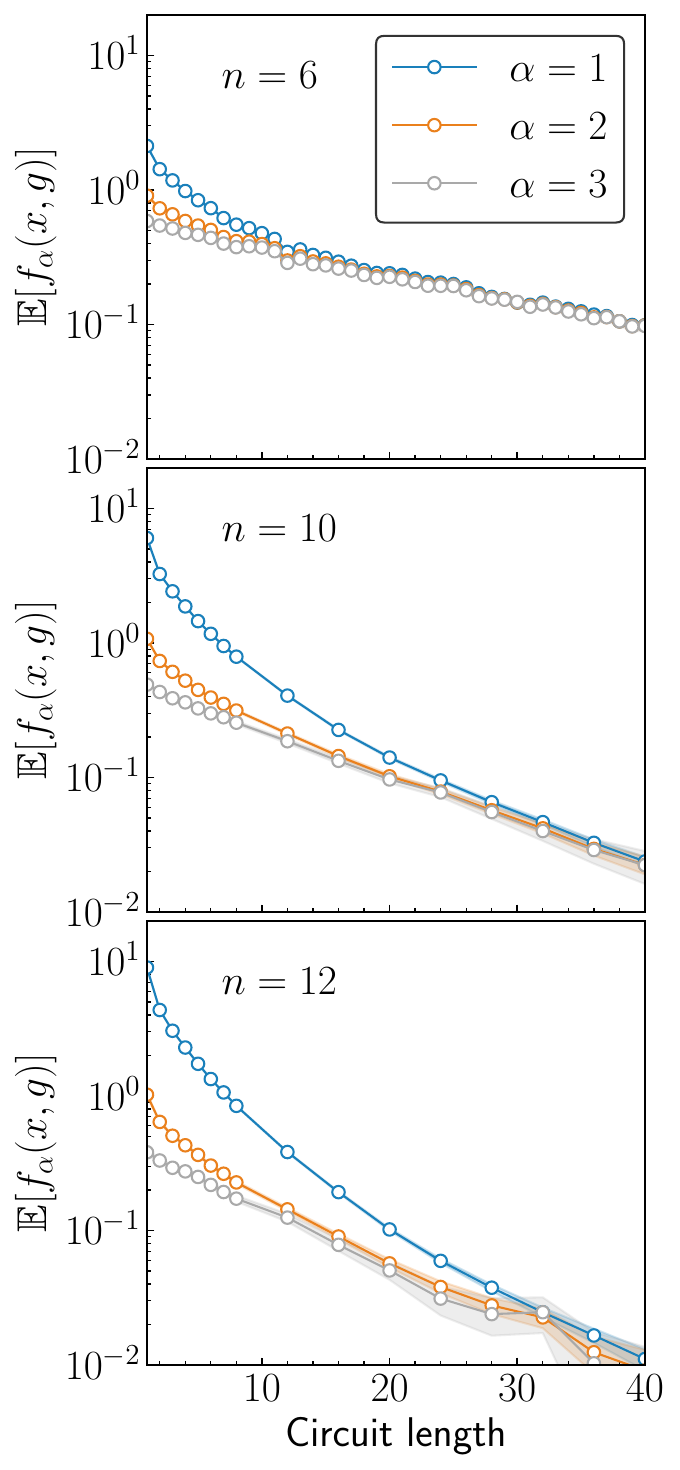}}

\begin{figure*}[htbp]
    \centering
    \hbox{%

        \begin{tikzpicture}[baseline=(current bounding box.north)]
        \hspace{1cm}%

            \node[inner sep=0pt] {%
                \begin{minipage}[t][9cm][s]{9cm}
                    \centering
                    \panelplot{10cm}
                        {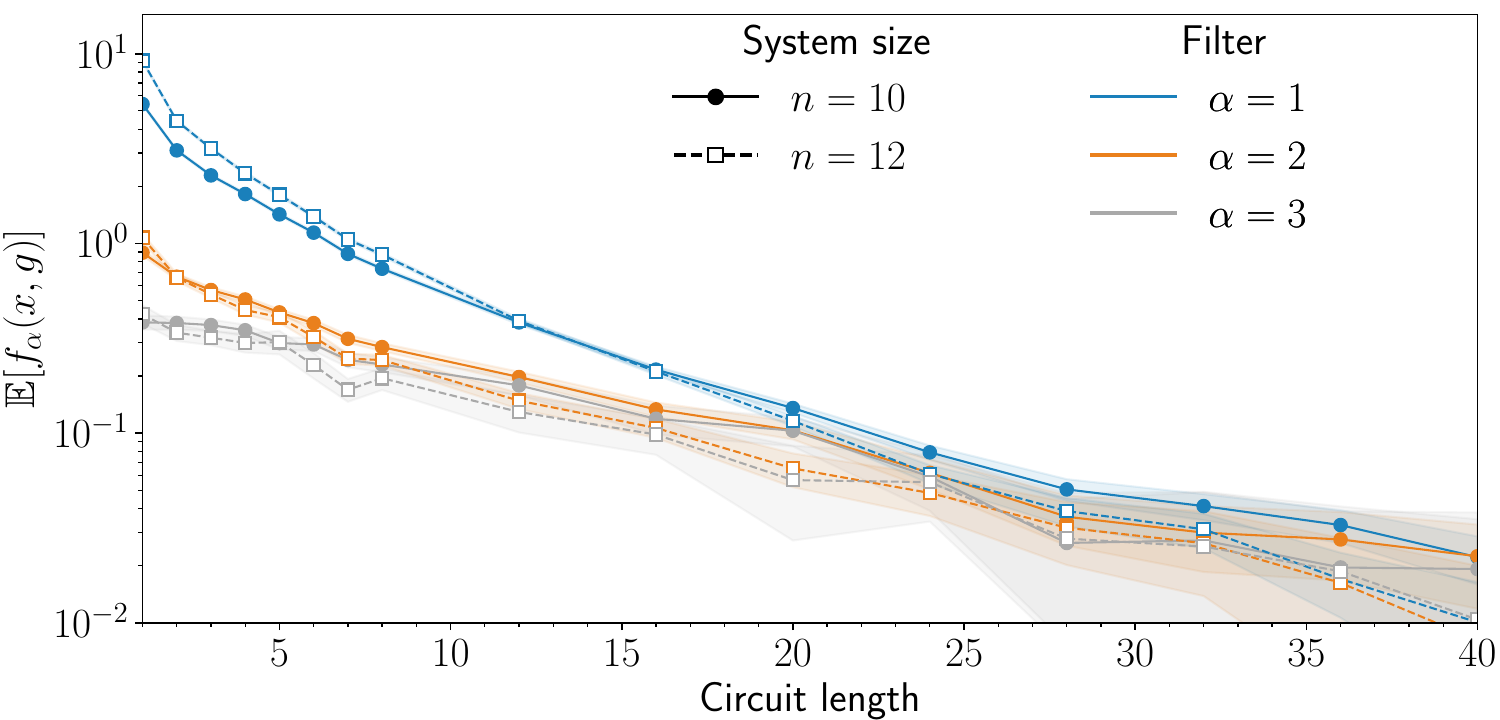}
                        {(a)\,\iconCl}{south west}{(1.10cm,0.77cm)}

                    \vfill

                    \makebox[10cm][c]{%
                        \panelplot{4.5cm}
                            {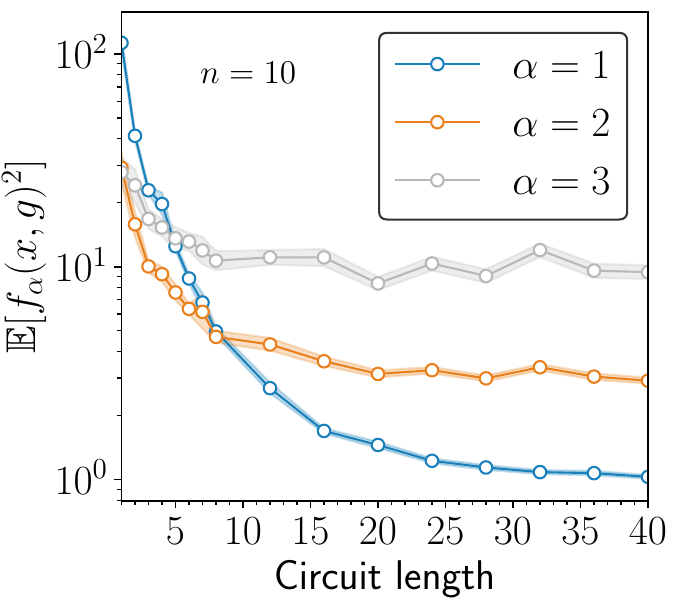}
                            {(b)\,\iconCl}{south west}{(0.92cm,0.78cm)}%
                        \hspace{0.8cm}%
                        \panelplot{4.5cm}
                            {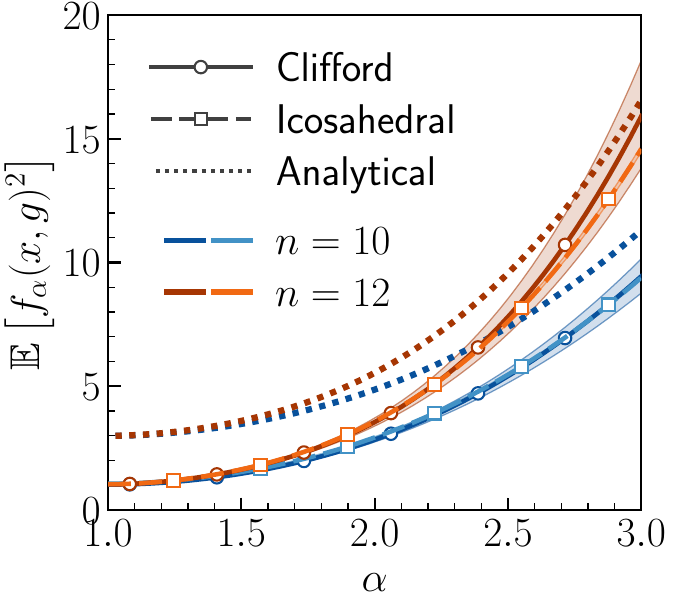}
                            {(c)}{south east}{(-0.30cm,0.68cm)}%
                    }
                \end{minipage}%
            };
        \end{tikzpicture}%
        \hspace{2.5cm}%
        \begin{tikzpicture}[baseline=(current bounding box.north)]
            \node[inner sep=0pt] {%
                \begin{minipage}[t]{4.2cm}
                    \vspace{0pt}
                    \centering
                    \panelplot{4.2cm}
                        {Icosahedral_curves_Rigetti.pdf}
                        {(d)\,\iconIco}{south west}{(0.92cm,0.77cm)}
                \end{minipage}%
            };
        \end{tikzpicture}%
    }
    \caption{Randomized benchmarking experiments performed on the Cepheus-1-36Q Rigetti device. We use random circuits as described in Figure \ref{fig:fig1_explanation}b and show results for both the Clifford ensemble and the icosahedral ensemble. For $n=6$, $m \in \{1\hdots 40\}$ and for $n=10/12$ we have $m \in \{1,  2,  3,  4,  5,  6,  7,  8, 12, 16, 20, 24, 28, 32, 36, 40\}$. For the number of circuits and shots, refer to Table \ref{tab:experiments}. Uncertainties are estimated using bootstrapping procedures and shaded areas represent 95\%-uncertainty intervals. In panel (a) we show the first moments using the Clifford ensemble and in panel (b) we show the corresponding second moments for $n=10$. In panel (c), we compare the estimated second moments for both ensembles at long sequence length ($m=40$) to the analytical second moments in the absence of noise (see Proposition~\ref{prop:var_no_noise} and Equation \ref{eq:varglob}). In panel (d), the RB curves for the icosahedral ensemble are shown.}

    \label{fig:fig3_rigetti_exp}

\end{figure*}

\section*{Experiment}

We demonstrate our method on Rigetti's superconducting 36-qubit quantum processor, Cepheus-1-36Q \cite{Field2024}. 
Due to the increase in variance with $\alpha$, precise infidelity estimates require the execution of many different random circuits. 
Increasing the number of shots a fixed circuit is executed with has traditionally been more efficient for runtime than a corresponding increasing in the number of distinct circuits executed. However, collecting more shots of the same circuit is often ineffective, in particular for Clifford circuits, in improving precision of observable estimation for small and intermediate system sizes~\cite{heinrich2022randomized,helsen2023thrifty}.
 We show here that we are able to execute low shot counts of many different circuits without a corresponding increase to the wall-time of the data acquisition by using an FPGA-integrated implementation of the random circuit sampling. 

Recall that the typical execution of a circuit on many platforms, including the Rigetti stack, 
follows a standard pattern.  The program written in a quantum instruction language must be compiled to instructions and configuration data suitable for the control system. This executable binary is then loaded onto the control system where it is usually interpreted by custom FPGA firmware, executed $S$ times and then the measurement results are returned to the user. For experiments where $S$ is large, this is a relatively efficient procedure---the time to compile and load the program and return the results will be small compared to the sampling time. However, in the scenario of many circuits with few shots, the overhead of loading the programs and returning results can become dominant, especially in hardware based on superconducting circuit based processors, due to their fast operation times. 

To avoid the per-circuit overhead, we utilized parameterized quantum circuits with the instructions to generate new sets of parameters on each shot embedded into the control system binaries themselves. The runtime re-writable parameters in these binaries include the angles of $\text{RZ}(\phi)$ gates. Our $\text{RZ}(\phi)$ gates are realized as virtual Z gates that adjust the phase of the RF drive of subsequent $\text{RX}(\theta \in k.\pi/2)$, $k \in \mathbb{Z}$, and do not otherwise alter the pulse timing schedule of the circuit \cite{McKay2017,Abrams2020}.
We utilized a 48-bit linear-feedback-shift-register (LFSR) pseudo-random number generator  developed previously for readout randomization \cite{arrasmith2023development}, masked to 16 bits at the output.
To translate this random integer into a random element from a discrete unitary subgroup, we loaded a lookup table of the angles $\{ \phi_0, \phi_1, \phi_2 \}$ onto the control system in their native 48-bit fixed-point format for each member of the unitary group decomposed to $\text{RZ}(\phi_2) \, \text{RX}(\pi/2) \, \text{RZ}(\phi_1) \, \text{RX}(\pi/2) \, \text{RZ}(\phi_0)$. 
For the $24$-element single--qubit Clifford group, this corresponds to a small $24 \times 3$ element lookup table which is shared by all operations on each control system unit. The random integer is then modulo-divided by the size of the lookup table to determine the selected operation. 
We provided each qubit with its own LFSR and seeded each qubit differently. 
In the post-processing we determine the executed gates from only the seed values that were sent to the control system omitting the need to store and communicate any circuits. See Figure~\ref{fig:fig1_explanation}d for an illustration. 

Using this technique, the throughput of the randomized circuit is limited primarily by the passive reset time, which was set to $200\mu s$ for this experiment.  
This allows us to execute $10^6$ different circuits in a few minutes, making accurate randomized benchmarking feasible for larger systems.

To limit the size of the lookup tables and avoid coordination of qubit controllers, the FPGA-integrated randomization is limited to independently randomized single--qubit gates. 
We construct the circuit by interleaving entangling layers of CZ gates with random layers of single--qubit gates; see Figure~\ref{fig:fig1_explanation}b. 
The construction is equivalent to a brickwork circuit of two--qubit bricks consisting of a CZ gate with independent random single--qubit gates applied before and after it on both qubits.

The resulting circuits are theoretically expected to have slightly different mixing behavior than the circuits with uniformly random two-qubit Clifford gates analyzed before, but for long sequence lengths converge to a $2$-design, as required for the filter functions in this paper to apply.
We implement three different single-qubit ensembles: the  $24$-element Clifford group, the $60$-element icosahedral group and the $12$-element tetrahedral group. The icosahedral group forms a $4$-design, which means that for this ensemble it is actually not ineffective to take multiple shots per circuit~\cite{helsen2023thrifty} (in contrast to the Clifford group, as discussed before). As a consequence, we choose to use $10$ shots per circuit for the icosahedral ensemble, while for the Clifford and tetrahedral ensembles we stick to the single-shot case. Repeating circuits reduces the number of filter function evaluations, which are particularly costly for the icosahedral group, where phase space simulation is not possible. 

We show the results of a $6$-qubit experiment using the icosahedral ensemble and a $10$- and $12$-qubit experiment with both ensembles in Figure \ref{fig:fig3_rigetti_exp}. The results for the tetrahedral group can be found in Figure \ref{fig:tetra}. The experimental settings and estimated infidelities for all experiments are summarized in Table \ref{tab:experiments}.

The empirical first and second moments show qualitatively similar behavior to the analytical evaluations and simulations discussed above. 
For the $6$-qubit data, we find an estimated minimal mixing length of $18$ using the model-based TN-simulation, meaning that cautious fitting is possible for all values of $\alpha$, similar to the simulations shown in Figure \ref{fig:figure2}. 

For the larger system sizes $n=10$ and $n=12$, the estimated mixing lengths for filter functions with $\alpha$ close to one approach, and in the $n=12$ case exceed, the maximum circuit depth of $40$ layers (e.g. for the icosahedral ensemble, we estimate mixing lengths of $36$ and $40$ respectively). Thus, accurate infidelity estimates require moving away from the standard filter function in this setting. By contrast, we find that the trace filter function, $\alpha = 3$, yields accurate and precise infidelity estimates already at very short circuit depths, below $10$ layers. 

We note that all three single-qubit groups yield comparable infidelities, rendering them qualitatively interchangeable benchmarks of this particular device.

In Figure~\ref{fig:fig3_rigetti_exp}c, we compare the second moments predicted by Proposition~\ref{prop:var_no_noise} and the second moments encountered in all experiments at long sequence lengths.
The converged second moments behave qualitatively similarly to the predicted idealized second moments, even for the non-Clifford groups,  with an offset in the constant mainly affecting small values of $\alpha$.

\newcolumntype{B}{!{\vrule width 0.5pt}}
\newcolumntype{E}{!{\vrule width \heavyrulewidth}}

\newcommand{\val}[2]{$#1\!\pm\!#2$}  
\newcommand{\nodata}{---}        
\newcommand{\tbd}{$?$}           
\newcommand{\unit}{\scriptsize$(\times 10^{5})$}  

\begin{table}[t]
\centering
\setlength{\tabcolsep}{3pt}
\renewcommand{\arraystretch}{1.1}
\setlength{\aboverulesep}{0pt}
\setlength{\belowrulesep}{0pt}
\setlength{\extrarowheight}{2pt}
\footnotesize
\begin{tabular}{E c B c B cc B cc B cc E}
\toprule
\multirow{3}{*}{$n$} &
 & \multicolumn{2}{c B}{\textbf{Tetrahedral}}
 & \multicolumn{2}{c B}{\textbf{Clifford}}
 & \multicolumn{2}{c E}{\textbf{Icosahedral}} \\
 & $N$ & $S$ & $\epsilon$ & $S$ & $\epsilon$ & $S$ & $\epsilon$ \\
 & \unit & & & & & & \\
\midrule
 6 & $2.5$ & $1$ & \val{0.045}{0.001}
        & \nodata & \nodata
        & $1$ & \val{0.047}{0.001} \\
10 & $1.0$ & $1$ & \val{0.08}{0.01}
        & $1$ & \val{0.08}{0.01}
        & $10$ & \val{0.082}{0.003} \\
12 & $1.0$ & $1$ & \val{0.14}{0.03}
        & $1$ & \val{0.10}{0.02}
        & $10$ & \val{0.103}{0.007} \\
\bottomrule
\end{tabular}
\caption{%
Overview of the randomized benchmarking experiments, arranged
by system size $n$. For each experiment, we
list the number of circuits $N$ per sequence length, the number of measurement shots $S$ per circuit and the infidelity
$\epsilon$
fitted with the trace filter ($\alpha = 3$). The uncertainty represents
a 68\%-uncertainty interval. A dash indicates that the
corresponding experiment was not performed.%
\label{tab:experiments}}
\end{table}

\section{Discussion}
Randomized benchmarking with structured random circuits faces a basic tension: experimentally convenient circuit ensembles may be far from uniformly distributed at short depths. This means the resulting RB signal can contain mixing-induced transients that obscure the fidelity decay of interest. Waiting for these transients to decay is often impractical, since it may require longer circuits, higher signal-to-noise ratio, or a larger measurement budget than the experiment allows. Thus, the mixing length becomes an experimental constraint.

In this work we showed that this constraint can be loosened in post-processing. 
We introduced a filter function family $f_\alpha$ interpolating between the standard LinXEB-type filter function and a trace filter function. We analytically establish that the parameter $\alpha$ controls the overlap of the RB signal with mixing-induced spectral components: under practical local symmetry assumptions, the trace filter function suppresses the ideal mixing contribution and exposes the fidelity decay at much shorter circuit depths. This bias reduction comes with increased variance and post-processing complexity, yielding a trade-off that can be tuned through $\alpha$. We also performed an in-depth analysis of the fitting problem itself, discussing the cautious fitting approach (which ought to be the standard approach in any RB/LinXEB decay experiment) and providing a comparative analysis of different methods for ascertaining the mixing length of a decay sequence (see appendix \ref{appendix:fit}). As discussed there, determining the mixing length is a nontrivial problem (and we caution against ``eyeballing'' the mixing length in an ad hoc fashion). Further work investigating the effect of noise on the mixing length would be very valuable in this respect and could also give more insights in the magnitude of the bias that is incurred when the use of sufficiently long sequences is infeasible. 

We note here that the general filtered RB protocol \cite{helsen2022general,heinrich2022randomized} is formulated for arbitrary (compact) groups using a representation-theoretic generalization of the filter function. For simplicity, we restricted our attention here to measures that converge to unitary $2$-designs, for which $f_1$ is the only nontrivial standard filter function. But an analogous representation-theoretic generalization of our results is possible and would be an interesting avenue of further research.

Beyond our analytical and numerical work, we also demonstrated our approach on superconducting qubits, extracting infidelities of one-dimensional entangling circuits in regimes inaccessible to the standard LinXEB-type protocol. This was enabled by a hardware-integrated implementation of random circuit sampling, which executes millions of distinct circuit instances near the physical throughput limit of the platform and removes control-stack latency as a dominant bottleneck. 

Several limitations remain. For larger systems, the variance of the filtered signal becomes the limiting factor, and further optimizing the allocation of circuit depths, circuit instances, and shots remains an important practical question. In addition, evaluating the filter function can be classically nontrivial. For the Clifford circuits, stabilizer methods make the computation feasible in the relevant regimes, but we showed the worst-case evaluation problem is hard and the intermediate-depth complexity of local Clifford circuits is not yet fully understood. 

More generally, our work demonstrates that scalable benchmarking with non-uniform circuit ensembles requires co-designing the filter function, fitting window, classical evaluation method, and hardware sampling architecture. Trace-filtered randomized benchmarking provides a new tool that extends experimental capabilities where short-depth bias was the main limitation.

\section{Acknowledgments}
M.H.~acknowledges funding by the Deutsche Forschungsgemeinschaft (DFG, German Research Foundation) under Germany’s Excellence Strategy - Cluster of Excellence Matter and Light for Quantum Computing (ML4Q) EXC 2004/1 - 390534769. J.W.~acknowledges financial support from the ERC (q-shadows 101117138). JH acknowledges funding from the Dutch Research Council (NWO) through a Veni grant (grant No.VI.Veni.222.331) and the Quantum Software Consortium (NWO Zwaartekracht Grant No.024.003.037). AK is funded through the  Groeifonds QDNL Cat-1 Program.
\paragraph*{Data and materials availability.} The experimental data is
available from the authors upon reasonable request.

\bibliography{apssamp}
\onecolumngrid
\appendix
\section{Appendices}
\setcounter{secnumdepth}{1}

\section{Weingarten calculus}
\label{appendix:wein}
The numerical arguments discussed in the main text are based on the Weingarten calculus \cite{collins2022weingarten}. In this appendix, we will discuss the mathematical basis of the Weingarten calculus. This discussion closely follows Ref. \cite{Haarintro}.
\begin{definition}
The Haar measure $\mu$ on a compact subgroup $G(2^n)$ of the unitary group $U(2^n)$ is the unique probability measure that is both left and right invariant under group multiplication, i.e. for all integrable functions $f$ and for all $V\in G(2^n)$, we have \begin{equation}
    \int_{G(2^n)}f(U)d\mu=\int_{G(2^n)}f(VU)d\mu=\int_{G(2^n)}f(UV)d\mu.
\end{equation}
\end{definition}
Given a Haar measure $\mu$, we can define moment operators on this measure.
\begin{definition}
    The k-th moment operator $\mathcal{M}_{\mu}^{(k)}: \mathcal{L}((\mathbb{C}^{2^n})^{\otimes k})\rightarrow \mathcal{L}((\mathbb{C}^{2^n})^{\otimes k})$ with respect to the Haar measure $\mu$ is defined as \begin{equation}
        \mathcal{M}_{\mu}^{(k)}(O)=\mathbb{E}_{U \sim \mu}[U^{\otimes k}OU^{\dagger \otimes k}]
    \end{equation}
    for all operators $O \in \mathcal{L}((\mathbb{C}^{2^n})^{\otimes k})$.
\end{definition}
For the evaluation of moment operators, it is important to understand commutants.
\begin{definition}
    The k-th order commutant of $S \subseteq \mathcal{L}(\mathbb{C}^{2^n})$ is defined as \begin{equation}
        \text{Comm}(S,k)=\{A \in \mathcal{L}((\mathbb{C}^{2^n})^{\otimes k}) : [A,B^{\otimes k}]=0 \quad  \forall \quad B\in S\}.
    \end{equation}
\end{definition}
It is well known that the k-th order commutant of the unitary group can be fully described in terms of permutation operators. 
\begin{definition}
    Given an element $\pi \in S_k$ of the symmetric group $S_k$, we define a permutation matrix $V_{2^n}(\pi)$ that permutes the $k$ copies of the $n$-qubit Hilbert space according to the permutation prescribed by $\pi$. That is, \begin{equation}
        V_{2^n}(\pi)|\psi_1\rangle\otimes\hdots \otimes |\psi_k\rangle=|\psi_{\pi^{-1}(1)}\rangle\otimes \hdots \otimes |\psi_{\pi^{-1}(k)}\rangle
    \end{equation}
    for all $|\psi_1\rangle, \hdots, |\psi_k\rangle \in \mathbb{C}^{2^n}$.
\end{definition}
The famous Schur-Weyl duality \cite{goodman2009symmetry} gives a direct expression of the k-th order commutant of the unitary group in terms of these permutation matrices.
\begin{theorem}
    The k-th order commutant of the unitary group is the span of the permutation matrices $V_{2^n}(\pi)$ associated with the symmetric group $S_k$, with $\pi \in S_k$. \begin{equation}
        \text{Comm}(U(2^n),k)=\text{span}(V_{2^n}(\pi):\pi \in S_k).
    \end{equation}
\end{theorem}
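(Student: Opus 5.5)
The statement to prove is Schur--Weyl duality: $\text{Comm}(U(2^n),k)=\text{span}(V_{2^n}(\pi):\pi\in S_k)$. Write $d=2^n$ and $\mathcal H=(\mathbb C^d)^{\otimes k}$.

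The inclusion $\supseteq$ is immediate. Each $V_d(\pi)$ only permutes tensor factors, and $U^{\otimes k}$ acts identically on every factor, so $V_d(\pi)U^{\otimes k}=U^{\otimes k}V_d(\pi)$. It suffices to check this on product vectors $|\psi_1\rangle\otimes\cdots\otimes|\psi_k\rangle$, which span $\mathcal H$. The span of the $V_d(\pi)$ is therefore contained in the commutant.

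For $\subseteq$ I would take the standard double-commutant route and proceed in three steps.

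\textbf{Step 1: reduce to a symmetric-tensor statement.} Let $\mathcal A=\text{span}(V_d(\pi))$. This is a $*$-subalgebra of $\mathcal L(\mathcal H)$ containing the identity. By the von Neumann bicommutant theorem, $\mathcal A=\mathcal A''$. It therefore suffices to show
\[
\text{Comm}(U(d),k)\subseteq \mathcal A''.
\]
Equivalently, every operator $B$ commuting with all $V_d(\pi)$ must lie in $\text{span}\{U^{\otimes k}:U\in U(d)\}$. Taking commutants of that inclusion then gives the claim.

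\textbf{Step 2: identify the commutant of the permutations.} Use the isomorphism $\mathcal L(\mathcal H)\cong\mathcal L(\mathbb C^d)^{\otimes k}$. Under it, conjugation by $V_d(\pi)$ permutes the tensor factors. So $\mathcal A'$ is the symmetric subspace $\text{Sym}^k(\mathcal L(\mathbb C^d))$. By polarization, this subspace is spanned by the elements $X^{\otimes k}$ with $X\in\mathcal L(\mathbb C^d)$.

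\textbf{Step 3: pass from arbitrary $X$ to unitaries.} This is the main obstacle. Let $W=\text{span}\{U^{\otimes k}:U\in U(d)\}$; I must show $X^{\otimes k}\in W$ for all $X$.

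\begin{itemize}[leftmargin=*]
\item First reduce to $X\in GL(d)$, since invertible matrices are dense and $W$ is closed because it is finite-dimensional.
\item Then use a Zariski-density argument. Take any linear functional $\ell$ vanishing on $W$. The function $p(X)=\ell(X^{\otimes k})$ is a polynomial in the entries of $X$ that vanishes on $U(d)$.
\item For Hermitian $H$, the map $z\mapsto p(e^{izH})$ is holomorphic in $z\in\mathbb C$ and vanishes for real $z$, so it vanishes identically. Hence $p$ vanishes on $e^{H'}$ for every Hermitian $H'$ (taking $z=-i$). This covers the positive definite matrices.
\item By the polar decomposition $X=UP$, every element of $GL(d)$ is a product of a unitary and a positive matrix. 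Running the same argument in each variable, via the identity theorem for the holomorphic map $Z\mapsto p(e^{Z})$ restricted from $i\mathfrak u(d)$ to the complexification $\mathfrak{gl}(d)$, gives $p\equiv 0$ on $GL(d)$.
\item Since $\ell$ was an arbitrary functional annihilating $W$, this yields $X^{\otimes k}\in W$.
\end{itemize}

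Combining the steps: $\mathcal A'\subseteq W$, hence $W'\subseteq\mathcal A''=\mathcal A$. Since $W'=\text{Comm}(U(d),k)$, this is the desired inclusion.
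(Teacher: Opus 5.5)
The paper gives no proof of its own here; it only refers to Goodman and Wallach. Your argument is a correct, self-contained version of the standard proof found there. It has three ingredients:
\begin{itemize}
\item the bicommutant theorem for the unital $*$-algebra spanned by the $V_d(\pi)$;
\item the identification of its commutant with the symmetric tensors in $\mathcal L(\mathbb C^d)^{\otimes k}$, which are spanned by the $X^{\otimes k}$ via polarization;
\item Weyl's unitary trick, i.e.\ Zariski density of $U(d)$ in $GL(d,\mathbb C)$, to show that $\operatorname{span}\{U^{\otimes k}\}$ already contains every $X^{\otimes k}$.
\end{itemize}
Compared with the citation, your route is self-contained and makes visible that the only input specific to $U(d)$ is this density.

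The one step worth tightening is your last bullet, which blends two routes; either one alone suffices.
\begin{itemize}
\item \emph{Left translation.} For fixed $U\in U(d)$, the polynomial $X\mapsto p(UX)$ also vanishes on $U(d)$, because $U(d)$ is a group. Your third bullet applied to it gives $p(UP)=0$ for every positive definite $P$, and the polar decomposition then finishes the argument.
\item \emph{Exponential map.} $Z\mapsto p(e^{Z})$ is entire on $\mathfrak{gl}(d)$ and vanishes on the skew-Hermitian matrices, which form a real form of $\mathfrak{gl}(d)$, so it vanishes identically. Hence $p$ vanishes on a neighborhood of the identity in $\mathcal L(\mathbb C^d)$ and is the zero polynomial. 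This also makes your density reduction to $GL(d)$ unnecessary.
\end{itemize}
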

For a proof, please refer to Ref.\ \cite{goodman2009symmetry}. Note that the Clifford group is a unitary 3-design, which means that for $k\leq3$, the k-th order commutant of the Clifford group and the unitary group are identical. An important result in the theory of moment operators is that the k-th moment operator is the orthogonal projector onto the commutant. 
\begin{theorem}
    The moment operator $\mathcal{M}_{\mu}^{(k)}[\cdot]=\mathbb{E}_{U \sim \mu}[U^{\otimes k}[\cdot] U^{\dagger \otimes k}]$ is the orthogonal projector with respect to the Hilbert-Schmidt inner product onto the commutant \text{Comm}($U(2^n),k$). 
\end{theorem}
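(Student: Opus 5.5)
We need to prove that the moment operator $\mathcal{M}_{\mu}^{(k)}$ is the Hilbert–Schmidt orthogonal projector onto $\text{Comm}(U(2^n),k)$, taking $\mu$ to be the Haar measure on $U(2^n)$. The plan is to check three things in turn: (i) the image of $\mathcal{M}_{\mu}^{(k)}$ lies in the commutant; (ii) $\mathcal{M}_{\mu}^{(k)}$ acts as the identity on the commutant, so it is idempotent with image exactly the commutant; (iii) $\mathcal{M}_{\mu}^{(k)}$ is self-adjoint with respect to $\langle A,B\rangle=\operatorname{Tr}[A^\dagger B]$. An idempotent, self-adjoint linear map is the orthogonal projector onto its image, which gives the claim.

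For (i), fix $V\in U(2^n)$ and $O$. Then
\[
V^{\otimes k}\,\mathcal{M}_{\mu}^{(k)}(O)\,V^{\dagger\otimes k}=\mathbb{E}_{U\sim\mu}\big[(VU)^{\otimes k}O(VU)^{\dagger\otimes k}\big]=\mathcal{M}_{\mu}^{(k)}(O),
\]
where the last equality is left invariance of the Haar measure from the definition above. Hence $\mathcal{M}_{\mu}^{(k)}(O)$ commutes with every $V^{\otimes k}$, i.e.\ it lies in $\text{Comm}(U(2^n),k)$.

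For (ii), take $A\in\text{Comm}(U(2^n),k)$. Then $U^{\otimes k}AU^{\dagger\otimes k}=A$ for every $U$, so $\mathcal{M}_{\mu}^{(k)}(A)=A$ because $\mu$ is a probability measure. Combined with (i), this gives $(\mathcal{M}_{\mu}^{(k)})^2=\mathcal{M}_{\mu}^{(k)}$ and $\operatorname{im}\mathcal{M}_{\mu}^{(k)}=\text{Comm}(U(2^n),k)$.

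For (iii), cyclicity of the trace gives
\[
\operatorname{Tr}\big[A^\dagger U^{\otimes k}BU^{\dagger\otimes k}\big]=\operatorname{Tr}\big[(U^{\dagger\otimes k}AU^{\otimes k})^\dagger B\big].
\]
Averaging over $U$ therefore yields $\langle A,\mathcal{M}_{\mu}^{(k)}(B)\rangle=\langle \mathbb{E}_U[U^{\dagger\otimes k}AU^{\otimes k}],B\rangle$.

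The main subtle point is identifying this last average with $\mathcal{M}_{\mu}^{(k)}(A)$. This requires invariance of the Haar measure under inversion $U\mapsto U^{-1}=U^\dagger$. That property is not literally part of the stated definition, but it follows from uniqueness: the pushforward of $\mu$ under inversion is again a left- and right-invariant probability measure, so it equals $\mu$. With this, the moment operator is self-adjoint.

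Finally, take $O$ arbitrary and $A$ in the commutant. Using (ii) and (iii),
\[
\langle A,O-\mathcal{M}_{\mu}^{(k)}(O)\rangle=\langle A,O\rangle-\langle \mathcal{M}_{\mu}^{(k)}(A),O\rangle=0.
\]
So $O-\mathcal{M}_{\mu}^{(k)}(O)$ is orthogonal to the commutant, which completes the proof.

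Combining this with the Schur–Weyl theorem stated just before gives the usual Weingarten formula: expand $\mathcal{M}_{\mu}^{(k)}(O)=\sum_\pi c_\pi V_{2^n}(\pi)$, and solve the Gram system $\sum_\sigma\operatorname{Tr}[V(\pi)^\dagger V(\sigma)]\,c_\sigma=\operatorname{Tr}[V(\pi)^\dagger O]$ for the coefficients $c_\sigma$.
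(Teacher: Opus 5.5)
Your proof is correct. It is the standard argument: image in the commutant by left invariance, identity on the commutant, and Hilbert--Schmidt self-adjointness via inversion invariance deduced from uniqueness of the Haar measure. The paper gives no proof of its own and defers to Ref.~\cite{Haarintro}, which uses this same argument.

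Reading $\mu$ as Haar measure on $U(2^n)$ is the right choice. For Haar measure on a compact subgroup $G(2^n)$, your argument verbatim gives the projector onto $\text{Comm}(G(2^n),k)$, which equals $\text{Comm}(U(2^n),k)$ exactly when $G(2^n)$ is a $k$-design (e.g.\ the Clifford group for $k\le 3$, as the paper uses).
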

The corresponding proof can be found in Ref.\ \cite{Haarintro}. It is important to note that the permutation matrices do not necessarily form an orthonormal basis of the commutant. Expressing a moment operator in terms of such a non-orthonormal basis of permutation matrices results in \begin{equation}\label{eq:wein}
    \mathbb{E}_{U \sim \mu}[U^{\otimes k}O U^{\dagger \otimes k}]=\sum_{\pi\in S_k,\sigma\in S_k}W(\pi^{-1}\sigma,2^n)\text{Tr}\left( V_{2^n}^{\dagger}(\sigma)O\right)V_{2^n}(\pi),
\end{equation}
where $W(\pi^{-1}\sigma,2^n)=G_{\pi,\sigma}^+$ are the Weingarten coefficients, defined in terms of matrix elements of the pseudoinverse of the Gram matrix, with matrix elements \begin{equation}
    G_{\pi,\sigma}=\text{Tr}\left(V_{2^n}(\pi)^{\dagger}V_{2^n}(\sigma)\right).
\end{equation}
Often, it is more convenient to use the Liouville representation \cite{blum2012density}. The moment operator $\mathcal{M}_{\mu}^{(k)}[\cdot]$ in the Liouville representation (which we denote with a hat) can be represented as \begin{equation}\label{eq:weinvec}
    \hat{\mathcal{M}}_{\mu}^{(k)}=\mathbb{E}_{U\sim \mu}U^{\otimes k}\otimes(U^*)^{\otimes k}=\sum_{\pi\in S_k,\sigma \in S_k}W(\pi^{-1}\sigma,2^n)|V_{2^n}(\pi))( V_{2^n}(\sigma)|,
\end{equation}
where $|X)$ indicates the vectorization of $X$.

\section{Signal and variance guarantees}\label{appendix:signal}
In this section we discuss in more detail propositions \ref{prop: bias} and \ref{prop:var_no_noise}) considered in the main text. 
We will work in the vectorized notation $|\cdot )$ for operators (in particular denoting by $|E_x)$ the vectorization of the operators $\ket{x}\!\bra{x}$ which project onto computation basis states).
The filter functions are parameterized by $\alpha$ and given by 

\begin{equation}
    f_{\alpha}(x,g)=(2^n+1)(x|P\omega(g)|\xi(\alpha)),
\end{equation}
with $\xi=\frac{1}{(1+\alpha)^n}(\mathbb{I}+\alpha Z)^{\otimes n}$,  and $\text{Tr}(\xi)=(\frac{2}{1+\alpha})^n$. With $\omega(g)$ we denote the matrix representation of random brickwork circuit $g$. The projector on the adjoint irrep $P$ is given by $P=I-\frac{1}{2^n}| \mathbb{I})(\mathbb{I}|$.

To formally state our guarantees we need to formalize several assumptions we make on the device implementing the quantum operations. We do this using the so-called implementation map, denoted by $\phi$, which maps elements of the group to CPTP maps. It can be though of as modeling the noisy implementation of the ideal operations (specified by $\omega$). In our context, $\phi$ is only required to exist on the support of $\nu$. 
Aligned with standard conventions, we say that an experiment admits a \emph{time-stationary and Markovian gate-set description} 
if there exists a 
a state $\tilde \rho$, POVM effects $\tilde E_x$ (which might differ from their ideal implementations) and a channel $\phi(g)$ associating quantum channels to unitary instructions,  such that executing the sequence of instructions $g = (g_1, \ldots, g_m)$ leads to sampling from the Born probability
\begin{equation}
    p(x,g) = (\tilde E_x| \phi(g_m) \ldots \phi(g_1) | \tilde \rho)\,. 
\end{equation}

The main restriction of this model is that we assume that the \emph{same} quantum channel models the noisy-implementation of the instruction $g$ in the support of $\nu$ independent of its position in the sequence and the time of the execution of the particular sequence.  We are now in the position to state our main result on the trace filter function.

\subsection*{Signal}
In an RB experiment, we are effectively approximating $\mathbb{E}_{g \sim \nu^{*m}}\left[\sum_x p(x,g) f_{\alpha}(x,g)\right]$, where $\nu^{*m}$ is the effective measure imposed by taking a random brickwork circuit of length $m$ consisting of 2-qubit Haar random Cliffords. Using Weingarten calculus, we can evaluate this expectation value. 

\begin{align}\label{eq:exp}
\mathbb{E}_{g \sim \nu^{*m}}\left[\sum_x p(x,g) f_{\alpha}(x,g)\right]=
 (2^n+1) \int_{g_1,\ldots,g_m} \sum_{x\in\{0,1\}}^n (E_x\otimes \widetilde{E}_x| \omega(g_m\cdots g_1)\otimes (\phi(g_m)\cdots \phi(g_1)|\xi \otimes \rho ) d\nu^{*m} - \frac{2^n+1}{2^n}\text{Tr}(\xi).
\end{align}

We now argue that if the noisy implementation $\phi$ does not break the local symmetry of a perfect implementation $\omega$ (we will specify exactly what we mean by this), the 
RB signal of the trace filter contains an explicit projection onto the dominant signal subspace. This argument was first sketched by a subset of the authors in the preprint \cite{heinrich2022randomized}.\\

We begin by formulating an intermediate lemma. We denote the single-qubit Clifford group as $\operatorname{Cl_1}$ and understand $\omega$ as representing $g_1, \ldots, g_n \in \operatorname{Cl_1}^{\times n}$ as the product unitary channel $X \mapsto g_1 \otimes \ldots \otimes g_n  X g^\dagger_1 \otimes \ldots \otimes g^\dagger_n$.  We say a superoperator $\mathcal X$ exhibits \emph{local Clifford symmetry} if $[X, \omega(g)]  = X\omega(g) - \omega(g)X = 0$ for all  $g \in \operatorname{Cl_1}^{\times n}$. For such superoperators can prove the following lemma relating the trace filter function "observable" $\xi$ to the superoperator trace.  
\begin{lemma}\label{lem:trace}
Consider a superoperator $\mathcal X$ with the property $[\mathcal X, \omega(g)] = 0$ for all $g \in \operatorname{Cl_1}^{\times n}$.  
Then, 
\begin{equation}
   (\xi(3)| \mathcal X | 0) =  4^{-n} \operatorname{Tr}(\mathcal X) \,,
\end{equation}
with $\xi(3)=\frac{1}{4^n}(\mathbb{I}+3 Z)^{\otimes n}$.
\end{lemma}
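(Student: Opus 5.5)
The plan is to reduce the statement to a single-qubit computation. I would do this by describing the commutant of the local Clifford channels explicitly, and then using that both sides of the claimed identity are linear in $\mathcal X$ and multiplicative over tensor products.

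\emph{Step 1: structure of the commutant.} On one qubit, the Liouville representation $\omega(g)$, $g\in\operatorname{Cl_1}$, splits the operator space into two invariant subspaces: $\operatorname{span}\{\mathbb I\}$ (trivial) and $\operatorname{span}\{X,Y,Z\}$ (traceless). The traceless part is irreducible, because $\operatorname{Cl_1}$ permutes the three Paulis transitively up to signs and forms a unitary $2$-design. Let
\[
P_0=\tfrac12|\mathbb I)(\mathbb I|, \qquad P_1=\mathrm{id}-P_0
\]
be the corresponding orthogonal projectors. For $\operatorname{Cl_1}^{\times n}$, the representation $\omega$ is the outer tensor product of these single-qubit representations. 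The irreducible components are therefore $\bigotimes_i \mathrm{im}\,P_{s_i}$ for $s\in\{0,1\}^n$. These are pairwise inequivalent, since they differ as representations of the factor acting on some qubit $i$ where the labels $s_i$ differ. By Schur's lemma, any $\mathcal X$ with $[\mathcal X,\omega(g)]=0$ for all $g\in\operatorname{Cl_1}^{\times n}$ is a linear combination
\[
\mathcal X=\sum_{s\in\{0,1\}^n} c_s \bigotimes_{i=1}^n P_{s_i}.
\]
I expect the main obstacle to be this step, specifically justifying irreducibility and multiplicity-freeness carefully. Everything afterwards is mechanical.

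\emph{Step 2: reduction to product operators.} Both $\mathcal X\mapsto(\xi(3)|\mathcal X|0)$ and $\mathcal X\mapsto 4^{-n}\operatorname{Tr}(\mathcal X)$ are linear. It therefore suffices to verify the identity for each basis element $\bigotimes_i P_{s_i}$. For these, both sides factorize over qubits:
\begin{itemize}
\item $\xi(3)=\bigotimes_i\tfrac14(\mathbb I+3Z)$ and $|0)=\bigotimes_i|\,|0\rangle\!\langle 0|\,)$, and $\xi(3)$ is Hermitian, so the left-hand side is a product of single-qubit Hilbert--Schmidt inner products;
\item the superoperator trace is multiplicative over tensor products.
\end{itemize}

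\emph{Step 3: single-qubit check.} It remains to show $\bigl(\tfrac14(\mathbb I+3Z)\big|P_s\big|0\bigr)=\tfrac14\operatorname{Tr}(P_s)$ for $s=0,1$. Write $|0\rangle\!\langle0|=\tfrac12(\mathbb I+Z)$. Then $P_0|0)=|\mathbb I/2)$ and $P_1|0)=|Z/2)$.
\begin{itemize}
\item For $s=0$: $\tfrac18\operatorname{Tr}\bigl((\mathbb I+3Z)\mathbb I\bigr)=\tfrac14$, and $\tfrac14\operatorname{Tr}P_0=\tfrac14\cdot 1$.
\item For $s=1$: $\tfrac18\operatorname{Tr}\bigl((\mathbb I+3Z)Z\bigr)=\tfrac34$, and $\tfrac14\operatorname{Tr}P_1=\tfrac14\cdot 3$.
\end{itemize}
Both cases agree. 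Multiplying over the $n$ qubits and summing over $s$ with coefficients $c_s$ gives $(\xi(3)|\mathcal X|0)=4^{-n}\operatorname{Tr}(\mathcal X)$.

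This computation also explains why the value $\alpha=3$ is singled out. It is the unique choice for which the weights of $\xi(\alpha)$ on the trivial and adjoint sectors are proportional to their dimensions $1$ and $3$.
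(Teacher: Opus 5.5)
Your proof is correct, and it reaches the result by a different route than the paper.

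\textbf{The paper's route.} The paper works directly in the $n$-qubit Pauli basis:
\begin{itemize}
\item Since the Pauli group sits inside $\operatorname{Cl}_1^{\times n}$, $\mathcal X$ is Pauli-diagonal.
\item Its diagonal entries are constant on local-Clifford orbits. These orbits are, up to sign, the sets of Paulis with a fixed support $z$, of size $3^{|z|}$. Hence $\operatorname{Tr}(\mathcal X)=2^{-n}\sum_z 3^{|z|}(w((z,0))|\mathcal X|w((z,0)))$.
\item Diagonality then lets one append the sum over $z'$. This reassembles $|0)$ on the right and $(\mathbb I+3Z)^{\otimes n}$ on the left.
\end{itemize}

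\textbf{Your route.} You use Schur's lemma to write the commutant explicitly as $\operatorname{span}\{\bigotimes_i P_{s_i}\}$. You then reduce, by linearity and factorization, to a one-qubit check.

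\textbf{Comparison.} Structurally the two arguments rest on the same fact. $\bigotimes_i P_{s_i}$ is precisely the projector onto the Paulis with support $s$, so your expansion is the paper's statement that the Pauli eigenvalues of $\mathcal X$ depend only on support.
\begin{itemize}
\item The paper's version is more elementary. It needs only the Pauli-commutant fact and orbit counting, with no irreducibility or inequivalence of outer tensor products. It also shows the $3$ in $\xi(3)$ as an orbit cardinality.
\item Your version explains $\alpha=3$ as the choice that matches weights to sector dimensions. It uses only that each single-qubit factor is a unitary $2$-design. It therefore carries over verbatim to the other single-qubit $2$-design ensembles used in the experiments (tetrahedral, icosahedral), without checking Pauli containment and transitivity separately.
\end{itemize}

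\textbf{A small point in Step 1.} Transitivity on $\{X,Y,Z\}$ up to signs does not by itself give irreducibility: a $\mathbb Z_3$ cycling $X\to Y\to Z$ fixes $X+Y+Z$. What gives irreducibility is the $2$-design property, or equivalently transitivity together with the Pauli sign flips. You should rest the argument on that.
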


\begin{proof}
Let $w((q,p))$ denote a multi-qubit Pauli operator indexed via the discrete phase-space vector $(q,p) \in \mathbb F^{2n}_2$.  
Recall the following properties: 
First, $2^{-n/2} w((p,q)) $ constitutes an orthogonal basis. Second, $w((q, 0))$ are the multi-qubit Pauli-$Z$ operators and   
\begin{equation}
     |0 \rangle\! \langle 0 | = 
     2^{-n} \sum_{q \in \mathbb F^n_2} w((q, 0))\,. 
\end{equation}
Third, for all $p \in \mathbb F_2^n$ there exist $g \in \operatorname{Cl}^{\times n}_1$ such that $\omega(g)(w((q,p))) = w((q, 0))$ and the orbit of $w((q, 0))$ has cardinality $3^{|q|}$ with $|q|$ the Hamming weight. 
Fourth, the commutant of the Pauli-group (discrete Heisenberg-Weyl group) are Pauli-diagonal maps, i.e. that are spanned by $|w((q,p))) (w((q,p))|$. Since the 
Pauli-group is a subgroup of $\operatorname{Cl}^{\times n}_1$ also $\mathcal X$ is diagonal in Pauli-basis. Combining these four properties, we find the lemma's statement by direct calculation  
\begin{equation}
\begin{split}
    \operatorname{Tr}(\mathcal X) &= 2^{-n} \sum_{a \in \mathbb F^{2n}_2} ( w(a) | \mathcal X | w(a))  \\
    &= 2^{-n} \sum_{z \in \mathbb F^{n}_2} 3^{|z|} ( w((z,0))|\mathcal X | w((z,0))) \\
    &= 2^{-n} \sum_{z,z' \in \mathbb F^{n}_2}  
    3^{|z|} ( w((z,0))|\mathcal X | w((z',0))) \\
    &= ( (\mathbb I + 3Z)^{\otimes n} | \mathcal X | 0) = 4^n ( \xi(3) | \mathcal X | 0)\,. 
\end{split}
\end{equation}
\end{proof}
Next we will derive our guarantees on the trace filter RB signal which are given as Proposition \ref{prop: bias} in the main text.  Given the shape of the filter function and the assumptions made on $\phi$ the actual derivation is reasonably straightforward. 
\begin{proposition}
Assume the RB experiment admits a time-stationary, Markovian gate-set description with perfect state-preparation ($| 0\rangle\!\langle 0|$, $\phi$, $\{\tilde E_i\}$) on the support of $\nu$.  Let $\tilde M = \sum_x |E_x ) ( \tilde E_x|$ and $T : X \mapsto \int \omega^\dagger(g) X \phi(g) \rm d\nu(g)$. 
Further assume that $[T^m(\tilde M), \omega(\operatorname{Cl}_1^{\otimes n})]=0$,. 
Then, 
\begin{equation}
     \mathbb E [f_3] = \frac{2^n + 1}{4^n} \operatorname{Tr}[P T^m(\tilde M)]\, .
\end{equation}
\end{proposition}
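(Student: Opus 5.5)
We compute $\mathbb E[f_3]$ directly from the Born probabilities of the gate-set description. We then recognise the result as a matrix element of $P\,T^m(\tilde M)$, and finish with Lemma~\ref{lem:trace}.

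\textbf{Step 1: rewrite the filter.} We work with $f_3(x,g)=(2^n+1)(E_x|P\,\omega(g)|\xi(3))$. This agrees with the main-text definition because
\begin{equation}
(E_x|P|Y)=\langle x|Y|x\rangle-2^{-n}\operatorname{Tr}Y ,
\end{equation}
and $2^{-n}\operatorname{Tr}\xi(3)=2^{-n}(2/4)^n=4^{-n}$. Two facts make the rewriting useful:
\begin{itemize}
\item $\omega(g)$ is a unitary channel, so it commutes with $P=I-2^{-n}|\mathbb I)(\mathbb I|$, which projects onto traceless operators.
\item Everything is real, so we may transpose the scalar to $(\xi(3)|\,\omega(g)^\dagger P\,|E_x)$, using $\omega(g)^\dagger P=P\,\omega(g)^\dagger$.
\end{itemize}

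\textbf{Step 2: assemble the expectation.} With perfect state preparation, $p(x,g)=(\tilde E_x|\phi(g_m)\cdots\phi(g_1)|0)$. Summing over $x$ and pulling the $x$-independent pieces out of the sum gives
\begin{equation}
\mathbb E[f_3]=(2^n+1)\,\Big(\xi(3)\Big|\,P\int \omega(g_1)^\dagger\cdots\omega(g_m)^\dagger\,\tilde M\,\phi(g_m)\cdots\phi(g_1)\,d\nu(g_1)\cdots d\nu(g_m)\Big|0\Big).
\end{equation}
Here we used $\omega(g_m\cdots g_1)^\dagger=\omega(g_1)^\dagger\cdots\omega(g_m)^\dagger$ and $\sum_x|E_x)(\tilde E_x|=\tilde M$. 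The $g_i$ are i.i.d., so the integral factorises layer by layer. The innermost integral over $g_m$ is $T(\tilde M)$; conjugating by the $g_{m-1}$ layer gives $T^2(\tilde M)$, and inductively the whole integral is $T^m(\tilde M)$. Hence
\begin{equation}
\mathbb E[f_3]=(2^n+1)\,(\xi(3)|\,P\,T^m(\tilde M)\,|0).
\end{equation}

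\textbf{Step 3: apply Lemma~\ref{lem:trace}.} Set $\mathcal X=P\,T^m(\tilde M)$. By assumption $T^m(\tilde M)$ commutes with $\omega(\operatorname{Cl}_1^{\times n})$, and $P$ commutes with every unitary channel, so $\mathcal X$ has local Clifford symmetry. The lemma then gives $(\xi(3)|\mathcal X|0)=4^{-n}\operatorname{Tr}(\mathcal X)$, and therefore
\begin{equation}
\mathbb E[f_3]=\frac{2^n+1}{4^n}\operatorname{Tr}[P\,T^m(\tilde M)] ,
\end{equation}
which is the claim.

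\textbf{Main obstacle.} The only delicate point is the bookkeeping in Step 2. One must check that the order of the $\omega^\dagger$ and $\phi$ factors matches the convention $T:X\mapsto\int\omega^\dagger(g)X\phi(g)\,d\nu$, so that the nesting really produces $T^m$ rather than an operator with reversed order. One must also check that moving $P$ across $\omega(g)^\dagger$ is legitimate, which holds because unitary channels preserve both trace and the identity. Once these are verified, the rest follows immediately from the lemma.
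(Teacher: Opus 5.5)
Your proposal is correct and follows the same route as the paper, which states the identity $\mathbb E[f_3]=(2^n+1)\,(\xi(3)|\,P\,T^m(\tilde M)\,|0)$ without derivation and then applies Lemma~\ref{lem:trace}, using that $P$ commutes with local Clifford channels. Your Steps 1--2 supply the bookkeeping the paper leaves implicit, and it checks out: the scalar is real, $P$ commutes past $\omega(g)^\dagger$, and the i.i.d.\ layers unroll into $T^m(\tilde M)$ in the correct order.
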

\begin{proof}
Under the assumption, the statement is an immediate consequence of Lemma~\ref{lem:trace} since $P$ is locally Clifford-invariant  
\begin{equation}
    \mathbb E [f_3(x,g)] = (2^n + 1)
    (\xi(3) |P\,T^m(\tilde M) | 0) 
    = \frac{2^n + 1}{4^n} \operatorname{Tr}[P\, T^m(\tilde M)]\,. 
\end{equation}
\end{proof}
If $\phi(g) = \omega(g)$, i.e.\  gates are perfectly implemented, 
and $\nu$ is right-invariant under $\operatorname{Cl}_1^{\times n}$, $T^m[\tilde M]$ commutes with local Clifford channels for all $m$ and operators $\tilde M$.  
Here, the first circuit layer implements (at least) a perfect local Clifford twirl. 
Further, for $\phi(g) = \omega(g)$,  
$P$ is the non-trivial fixed point of $T$ and the sub-dominant decays are associated to orthogonal subspaces.   As a consequence, for perfect gates the 
expected RB signal becomes $\mathbb E [f_3(x,g)] = \frac{2^n + 1}{4^n}\operatorname{Tr}[P\, T(\tilde M)]$ independent of $m$.

\subsection*{Variance}\label{appendix:varnonoise}

We now present analytical bounds on the variance induced by the interpolating filter function family (so $\alpha \in [1,3]$) for two extreme cases for the distribution $\nu$, namely (1) uniformly random $n$-qubit Clifford gates and (2) independently randomly sampled single qubit Clifford gates. In the context of brickwork circuits these can be seen as proxies for the long and short sequence length regimes respectively. We will present bounds for noiseless gates, i.e. $\phi = \omega$ and zero SPAM (i.e. $\tilde E_x = E_x$ and $E_0 = \rho$), but expect these to hold more generally (at least in a qualitative sense).\\

The variance of the RB signal can be determined from the first moment (calculated above) and the second moment of the distribution.
\begin{align}\label{eq:var}
\mathbb{E}_{g \sim \nu^{*m}}\left[\sum_x p(x,g) f^2_{\alpha}(x,g)\right]
= & \frac{(2^n+1)^2}{4^n}\text{Tr}(\rho)^2\text{Tr}(\xi)^2\underbrace{\int_{g_1,\ldots,g_m}\sum_{x\in \{0,1\}^n}(\widetilde{E}_x|(\phi(g_m)\cdots \phi(g_1)|\rho) d\nu^{*m}}_{\text{1st-order moment operator}} \\& \nonumber- 2\frac{(2^n+1)^2}{2^n}\text{Tr}(\rho)\text{Tr}(\xi)\underbrace{\int_{g_1,\ldots,g_m}\sum_{x\in \{0,1\}^n}(E_x\otimes \widetilde{E}_x| \omega(g_m\cdots g_1)\otimes (\phi(g_m)\cdots \phi(g_1)|\xi \otimes \rho )d\nu^{*m}}_{\text{2nd-order moment operator}}
\\& \nonumber+ (2^n+1)^2 \underbrace{\int_{g_1,\ldots,g_m}\sum_{x\in \{0,1\}^n}({E}_x^{\otimes 2}\otimes \widetilde{E}_x| \omega(g_m\cdots g_1)^{\otimes 2}\otimes (\phi(g_m)\cdots \phi(g_1)| \xi ^{\otimes 2}\otimes \rho)d\nu^{*m}}_{\text{3rd-order moment operator}}.
\end{align}

We need to evaluate the moment operators in Equation \ref{eq:var}. In the noiseless case, it is sufficient to analyze the case where $m=1$. We begin with uniformly sampled Clifford case.  We will use that, for group elements $g$ drawn from the Haar measure $\mu$ of $n$-qubit Cliffords, we have (using local invariance to remove the $x$ dependence)

\begin{equation}\label{eq:weinclif}
    \int ( \rho^{\otimes k}|\omega(g)^{\otimes k}| \rho \otimes \xi^{\otimes k-1}) d\mu=\frac{1}{2^n}\frac{1}{(1+\alpha)^{n(k-1)}}\sum_{\pi \in S_k}( V_{2^1}(\pi)|(\mathbb{I}+Z)\otimes(\mathbb{I}+\alpha Z)^{\otimes k-1}) ^{ n} \prod_{i=0}^{i=k-1}\left(\frac{1}{2^n+i}\right).
\end{equation}
This equation follows from Equation \ref{eq:weinvec} and the fact that \begin{align}
    &( \rho|V_{2^n}(\pi))=1 \\
    &V_{2^n}(\pi)=(V_{2^1}(\pi))^{\otimes n}
\end{align} for all $\pi$ and that \begin{equation}
    \sum_{\pi\in S_k,\sigma \in S_k}W(\sigma^{-1}\pi,2^n)( V_{2^n}(\pi)|=\sum_{\pi\in S_k}\prod_{i=0}^{i=k-1}\left(\frac{1}{2^n+i}\right)( V_{2^n}(\pi)|.
\end{equation}

Using Equations \ref{eq:weinclif},\ref{eq:var}, \ref{eq:exp} we can derive that \begin{align}
     \mathbb{E}_{g \sim \nu^{*m}}\left[\sum_x p(x,g) f_{\alpha}(x,g)\right]=&1-\frac{1}{(1+\alpha)^n} \label{eq:meanglob}\\
   \mathbb{E}_{g \sim \nu^{*m}}\left[\sum_x p(x,g) f^2_{\alpha}(x,g)\right]= \nonumber &(2^n+1)\left(\frac{1}{1+\alpha}\right)^{2n}\left(\frac{2-2^n+2^n(1+\alpha^2)^n+2(1+\alpha)^{2n}-4(1+\alpha)^n}{2^n+2}\right)\\ \label{eq:varglob}=&  \mathcal{O}\left(2+2^n\frac{(1+\alpha^2)^n}{(1+\alpha)^{2n}}\right)=\mathcal{O}\left(2+\left[1+\left(\frac{\alpha-1}{\alpha+1}\right)^2\right]\right).
\end{align}
which corresponds to the variance expression discussed in the main text.\\

For the local Clifford group ($Cl_1^{\times n}$) we can perform a very similar (and standard) calculation) to obtain variances for arbitrary $\alpha$. It is noteworthy that in this regime the variance (dominated by the third moment term) is actually smaller for $\alpha=3$ than it is for $\alpha =1$. We have 
\begin{align}\label{eq:single_q_var}
    &\mathbb{E}_{g \sim \nu^{*m}}\left[\sum_x p(x,g) f^2_{1}(x,g)\right]=\mathcal{O}\left(2^n\right) \\
    &\mathbb{E}_{g \sim \nu^{*m}}\left[\sum_x p(x,g) f^2_{3}(x,g)\right]=\mathcal{O}\left(\left(\frac{3}{2}\right)^n\right).
\end{align}
However this does not mean that the associated sample complexity is smaller. This reversal in variance is more than erased by the fact that in the short sequence length regime the expected value of the $\alpha=1$ is exponentially large in $n$ (this is a well known phenomenon in the LinXEB literature).

\begin{align}
    &\mathbb{E}_{g \sim \nu^{*m}}\left[\sum_x p(x,g) f_{1}(x,g)\right]=\mathcal{O}\left(\left(\frac{4}{3}\right)^n\right) \\
    &\mathbb{E}_{g \sim \nu^{*m}}\left[\sum_x p(x,g) f_{3}(x,g)\right]=\mathcal{O}\left(1\right).
\end{align}
One can calculate the signal to noise ratio in both cases by computing the ratio of the standard deviation and the expected value, resulting in a scaling of $\sqrt{\frac{9}{8}}^n$ for $\alpha=1$ and $\sqrt{\frac{3}{2}}^n$ for $\alpha=3$.

\section{Simulating RB using Weingarten calculus}\label{appendix:signal_sim}
In this section we use Weingarten calculus to directly calculate (numerically) expectation values and variances over the circuits encountered in RB experiments using brickwork random Clifford circuits. Specifically, we are interested in the expectation value and the variance of the filter functions when performing $n$-qubit RB with a brickwork circuit consisting of 2-qubit Haar random Cliffords. 
Although the measure $d\nu^{*m}$ is not a Haar measure, the brickwork circuits do consist of Haar distributed Cliffords and the moment operator with respect to $d\nu^{*m}$ can be expressed as a collection of local moment operators with respect to Haar measures, enabling us to use Equation \ref{eq:wein} (provided that the noise in the quantum circuit is modeled by a noise-channel that acts locally and independently on each local gate, i.e. the noisy implementation of a random 2-qubit Clifford $U$ is given by $\phi(\omega(U))=\Lambda\omega(g)$ with $\Lambda$ independent of $U$). Tensor network methods to evaluate such integrals over $d\nu^{*m}$ have been discussed in Refs. \cite{Haarintro,TNs,turkeshi2026lecture}. As an example, we will show how to evaluate \begin{equation}\label{eq:example}
    \int_{g_1,\ldots, g_m} d\nu(E_0 \otimes E_0 |\omega(g_m\cdots g_1)\otimes \phi(g_m)\cdots \phi(g_1)|\xi \otimes \rho ) ,
\end{equation} for $n=6$. 

 Leveraging Equation \ref{eq:weinvec}, we can rewrite Equation \ref{eq:example} in tensor network form. For simplicity, we assume that the noise can be modeled by qubit-independent, single-qubit noise, i.e. the noise channel on an $n$-qubit system $\mathcal{E}_n[\cdot]$ can be written as $\mathcal{E}_n[\cdot]=\mathcal{E}_1^{\otimes n}[\cdot]$, with $\mathcal{E}_1$ the noise channel experienced by an individual qubit. We denote the transfer matrix corresponding to the channel $I \otimes \mathcal{E}_1$ as $\Lambda_1$ and the transfer matrix corresponding to the channel $I \otimes \mathcal{E}_1 \otimes I \otimes \mathcal{E}_1$ as $\Lambda_2$. Furthermore, we define $\gamma \equiv \frac{1}{4(1+\alpha)^2}|I+\alpha Z)\otimes |I+Z) \otimes |I+\alpha Z)\otimes |I+Z)$ and $\xi=\frac{1}{16}\Lambda_2^{\dagger}|I+Z)^{\otimes 4}$. Then Equation \ref{eq:example} can be depicted in tensor network notation as
\begin{equation}
\begin{quantikz}[row sep=0.2cm, column sep=0.1cm]
\gate[2]{\xi} & \qw & & & \gate[2,style={rounded corners}]{\hat{\tau}_2^{(2)}}& \hdots\hdots\hdots && \qw & \gate[style={rounded corners}]{\Lambda_1} & \gate[2,style={rounded corners}]{\hat{\tau}_2^{(2)}} & \gate[2]{\gamma} \\
&  & \gate[2,style={rounded corners}]{\hat{\tau}_2^{(2)}} & \gate[style={rounded corners}]{\Lambda_1} & &\hdots\hdots\hdots & \gate[style={rounded corners}]{\Lambda_1}& \gate[2,style={rounded corners}]{\hat{\tau}_2^{(2)}} & \gate[style={rounded corners}]{\Lambda_1} & \qw &  \\
 \gate[2]{\xi} & \qw & & \gate[style={rounded corners}]{\Lambda_1} & \gate[2,style={rounded corners}]{\hat{\tau}_2^{(2)}}& \hdots\hdots\hdots & \gate[style={rounded corners}]{\Lambda_1}& \qw & \gate[style={rounded corners}]{\Lambda_1} & \gate[2,style={rounded corners}]{\hat{\tau}_2^{(2)}} & \gate[2]{\gamma} \\
&  & \gate[2,style={rounded corners}]{\hat{\tau}_2^{(2)}} & \gate[style={rounded corners}]{\Lambda_1} & & \hdots\hdots\hdots &\gate[style={rounded corners}]{\Lambda_1}& \gate[2,style={rounded corners}]{\hat{\tau}_2^{(2)}} & \gate[style={rounded corners}]{\Lambda_1} & &  \\
\gate[2]{\xi} & \qw &  &  \gate[style={rounded corners}]{\Lambda_1} & \gate[2,style={rounded corners}]{\hat{\tau}_2^{(2)}} & \hdots\hdots\hdots &\gate[style={rounded corners}]{\Lambda_1}& \qw & \gate[style={rounded corners}]{\Lambda_1} & \gate[2,style={rounded corners}]{\hat{\tau}_2^{(2)}} & \gate[2]{\gamma} \\
&  & \qw & \qw & & \hdots\hdots\hdots & &\qw & \gate[style={rounded corners}]{\Lambda_1} & \qw &
\end{quantikz}.
\end{equation}
Here $\hat{\tau}_n^{(k)}$ is a tensor that represents the effective action of the local projector onto the commutant $\text{Comm}(U(2^n),k)$, described by Equation \ref{eq:weinvec}. To derive ${\hat{\tau}_2^{(2)}}$, we first choose a basis to work in (see Ref.\ \cite{TNs} for a more detailed derivation of ${\hat{\tau}_2^{(2)}}$). A vectorized orthonormal basis of $\text{Comm}(U(2^2),2)$ is \begin{equation}
    \big{\{} \frac{1}{4}|\text{I})|\text{I}),\frac{1}{4\sqrt{15}}\left(|S)|\text{I}) +|\text{I})|S) +|S)|S) \right)\big{\}},
\end{equation}
with $S=X\otimes X+Y \otimes Y +Z\otimes Z$ for Pauli matrices $X,Y,Z$.
Inspired by this we choose the local basis \begin{equation}
    \big{\{}|\text{I})|\text{I}),|S)|\text{I}),|\text{I})|S),|S)|S) \big{\}}.
\end{equation}
Using Equation \ref{eq:weinvec}, we see that 

\begin{equation}
    \hat{\tau}_2^{(2)}=\frac{1}{5}\begin{bmatrix}
5 & 0 & 0 &0 \\
0 & 1& 1& 3\\
0 & 1& 1 & 3\\
0 & 1& 1& 3\\
\end{bmatrix}.
\end{equation}
To evaluate the tensor network, $\xi,\Lambda_1,\gamma$ (or at least their orthogonal projections onto this basis) must also be represented in this basis and then the network can be contracted using standard tensor network libraries. For larger qubit systems, it is necessary to use a matrix product states (MPS) representation. The bond dimension of the MPS representation will remain small throughout the tensor network evaluation, enabling efficient simulations (see Proposition 2 of Ref. \cite{TNs}).

\section{Complexity of evaluating the filter function}\label{appendix:filters}
In this appendix we provide further details on the computational complexity of evaluating the filter function $f_\alpha$. We begin consider the case where the gates used in the RB protocol are all Clifford. 
We will see that, even when all gates are Clifford, exact evaluation can be computationally difficult in the worst case. At the same time, the special structure of Clifford circuits allows for efficient evaluation in several important regimes.

If the circuit is Clifford, we can efficiently describe the state appearing in the filter function as a stabilizer state $\ket{S} = C^\dagger \ket{x}$ where $x$ is the measured bitstring and $C$ is the implemented Clifford operation. 
We are now interested in computing the quantity 
\begin{equation}
f_\alpha(x,C) = \frac{(2^{n}+1)}{(1+\alpha)^n} \sum_{w \in \{0,1\}^n\;: w\neq 0} \alpha^{|w|} \bra{S}Z^w \ket{S}.
\end{equation}
The state $\ket{S}$ is efficiently described by an extended stabilizer tableau
\begin{equation}
\mathrm{Tableau}(\ket{S})  = (s, Z, X),
\end{equation}
with the rows of $X,Z\in \{0,1\}^{n\times n}$ encoding the $X$ and $Z$ components of the stabilizer generators and $s\in\{0,1\}^{n}$ encoding the associated phases. The Pauli operators $Z^w$ has nonzero expectation value under $\ket{S}$ if only if $\pm Z^w$ is an element of the stabilizer group of $\ket{S}$. 
This is true only if there exists a vector $y\in \{0,1\}^n$ such that $y (Z, \, X) = (w,0) $ (i.e. we have to make the $X$ component zero). Finding all such $y$ (since we are interested in arbitrary $w$) is equivalent to finding a basis  for the kernel of $X$ (i.e. we want to solve $yX = 0$). Let's call the dimension of the kernel $k$ and denote by $V\in \{0,1\}^{k\times n} $ the matrix who's rows encode a basis for the kernel of $X$ (this can be found efficiently using Gaussian elimination). We can then write all such $y$ as $y=tV$ with $t\in \{0,1\}^k$. 
The filter function can now be evaluated as
\begin{equation}\label{eq:filter_evaluation}
f(x,C) = \frac{(2^{n}+1)}{(1+\alpha)^n}\sum_{t\in\{0,1\}^k \;: t\neq 0 } \alpha^{|tV|} (-1)^{|tV s|}\,.
\end{equation}
Direct evaluation of this equation requires calculating $2^k - 1$ terms, where $k = \operatorname{dim\,ker} X \leq n$.  
With a streaming implementation this costs $O(kn 2^k)$ time and $O(n^2)$ memory. In the worst-case $k = n$, and hence takes time exponential in the number of qubit.\\

Moreover, if we set $s$ to zero calculating the above reduces to calculating the weight enumerator of the linear code $V$ (thinking of its rows as codewords). Unless $\alpha = 1$ (or $\alpha = 0$) this is a $\#P$-hard problem, so we cannot expect a much better polynomial-time exact algorithm in $k$ for computing the filter function for arbitrary $\alpha$ than the exponential time direct calculation given above. \\

However we are not per se interested in the worst case complexity of this problem. For a uniformly random global Clifford unitary $C$, the unphased stabilizer of $C^\dagger\ket x$ is a uniformly random Lagrangian subspace $L \subset \mathbb F^{2n}_2$. 
The kernel dimension $k = \operatorname{dim}(L \cap Z_0)$, where $Z_0$ is the fixed Lagrangian subspace of $Z$-type Paulis. 
We have $2^k = | L \cap Z_0|$ and, thus, by linearity of expectation, 
\begin{equation}
  \mathbb E [2^k] = \sum_{z\in Z_0} \mathbb P[ z \in L]\,. 
\end{equation}
The zero vector contributes one. 
For any nonzero $z$, symmetry gives 
\begin{equation}
	\mathbb P[z\in L] = \frac{2^n - 1}{2^{2n} - 1} = \frac1 {2^n + 1}\, , 
\end{equation}
since each Lagrangian contains $2^n - 1$ non-trivial vectors of the $2^{2n}- 1$ non-trivial vectors in phase space.  
Therefore, 
\begin{equation}
  \mathbb E[ 2^k] =  1 + (2^n - 1) \frac1 {2^n + 1} = \frac{2^{n+1}}{2^n + 1}\, .
\end{equation}
Consequently, the expected number of nonvanishing summands in Eq.~\eqref{eq:filter_evaluation} is 
\begin{equation}
  \mathbb E [2^k - 1] = \frac{2^n - 1}{2^n + 1} \in O(1)\,. 
\end{equation}
Thus, for Haar-random global Clifford unitaries, direct evaluation is typically efficient, even though the worst-case and structured-circuit complexity remain exponential. 

Our work is mostly concerned with the regime where a quantum circuit has not yet approximated a Haar-random Clifford unitary. 
The other extreme are independent single-qubit Clifford rotations,
\begin{equation}
C = C_1\otimes \cdots \otimes C_n ,
\end{equation}
where each $C_i$ is sampled uniformly from the single-qubit Clifford group. 

For a product of single-qubit Clifford rotations, the state $C^\dagger \ket x$ is a product stabilizer state. 
The kernel dimension is distributed as $k \sim \operatorname{Binomial}\left(n,\tfrac{1}{3}\right)$
and $\mathbb E[k] = \tfrac{n}{3}$.

Hence, 
\begin{equation}
\mathbb E[2^k]=
\sum_{j=0}^{n}
2^j
\binom{n}{j}
\frac{1}{3^j}
\left(\frac{2}{3}\right)^{n-j}
=
\sum_{j=0}^{n}
\binom{n}{j}
\left(\frac{2}{3}\right)^{n}
=\left(\frac{4}{3}\right)^n\,.
\end{equation}

Thus, for a product of independent single-qubit Clifford rotations, the number of terms in Eq.~\eqref{eq:filter_evaluation} has typically an exponentially growing number of terms, with base $4/3$. 

However, using the fact that for product $C$ the entire filter function factorizes, allows one to calculate it in $O(n)$ time for all $\alpha$.  
For geometrically local Clifford circuits in 1D, using tensor network algorithms we can maintain this efficiency in $n$ with exponential scaling in the circuit depth.  
From \cite{schuster2025random}, we expect that for depth $O(\log(n))$ the approximation with uniformly random global stabilizer states becomes controlled. This suggests that a tensor network algorithm might stay efficient until the regime where direct evaluation becomes admissible in expectation for the Clifford circuits studied here. We leave the formal study of this question to future work.

\section{Fitting procedures}
\label{appendix:fit}
Appropriately choosing the part of the signal where the mixing components are sufficiently suppressed is not a simple task. The mixing length cannot be estimated from noiseless simulations of the $n$-qubit circuits (although it gives an approximation), since the mixing is partially determined by the noise. We confirmed this empirically by evaluating (see Appendix \ref{appendix:signal_sim}) the first and second moment for brickwork circuits of the same form as described in Figure \ref{fig:fig1_explanation}b for different noise models. As an example, we compare the signals for noiseless circuits ($n=20$, Clifford ensemble) to a noisy case, where every brick is followed by single-qubit depolarizing noise on each qubit, see Fig. \ref{fig:mix}. Note in particular in  Fig. \ref{fig:mix}a the significant change in mixing length for $\alpha=1$.\\

\begin{figure}[h]
\refstepcounter{figure} 
\setcounter{mycounterb}{0}
\centering
    \begin{minipage}[b]{5cm}
        \centering
        \includegraphics[width=5cm]{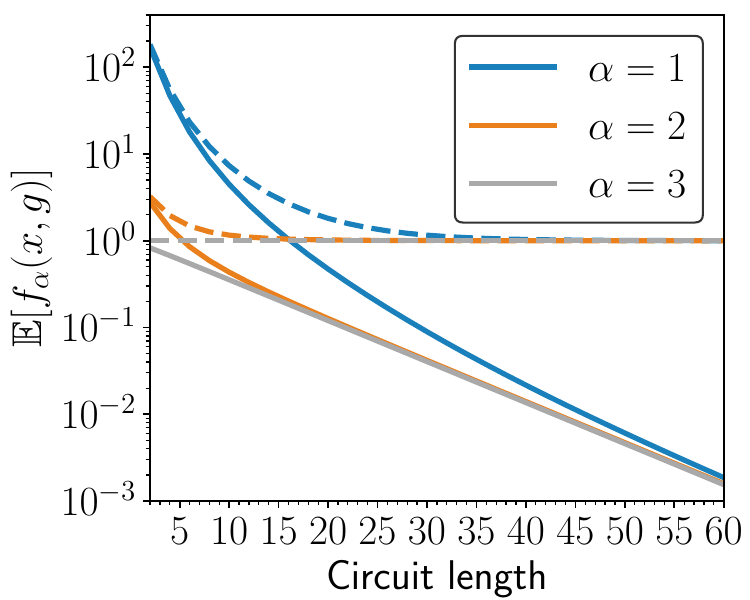}
        \refstepcounter{mycounterb}\label{fig:mixmean}
        \footnotesize(\themycounterb)
    \end{minipage}
    \hspace{0.6cm}
    \begin{minipage}[b]{5cm}
        \centering
        \includegraphics[width=5cm]{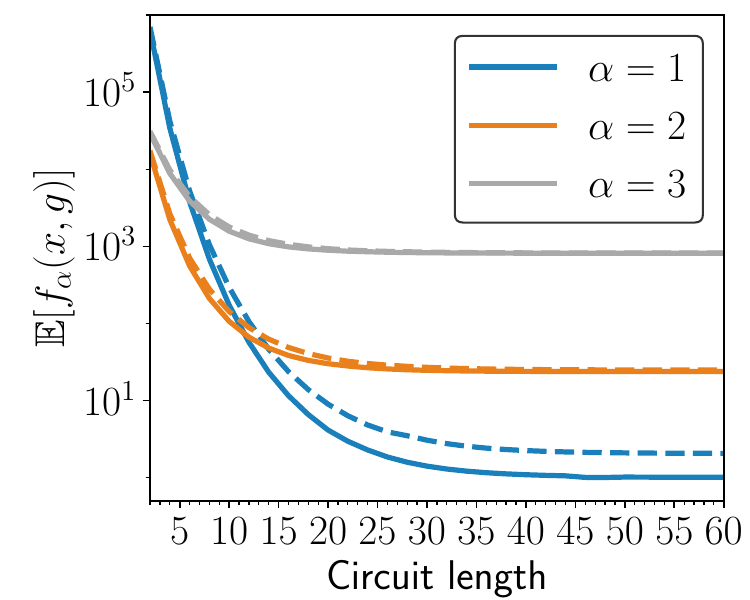}
        \refstepcounter{mycounterb}\label{fig:mixvar}
        \footnotesize(\themycounterb)
    \end{minipage}
    \hspace{0.6cm}
    \begin{minipage}[b]{5cm}
        \centering
        \includegraphics[width=5cm]{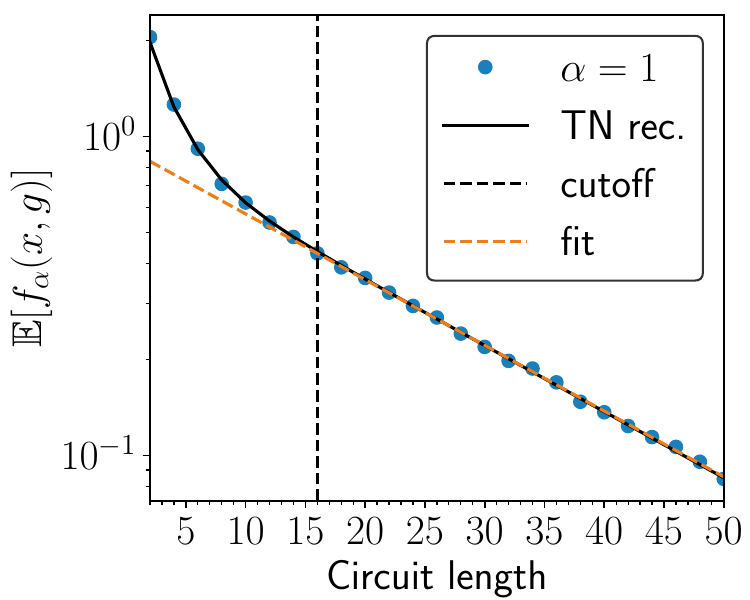}
        \refstepcounter{mycounterb}\label{fig:heuristic}
        \footnotesize(\themycounterb)
    \end{minipage}
\addtocounter{figure}{-1}
\caption{\label{fig:mix} Estimating the mixing length in practice. First (a) and second (b) moments of the RB signal for a brickwork
circuit of the same construction as described in Figure
\ref{fig:fig1_explanation}b with the Clifford ensemble and $n=20$. We show
results for noiseless circuits (dashed line) and circuits where each
brickwork is followed by single qubit depolarizing noise
($p=7.5\cdot 10^{-3}$) on each qubit (continuous line).
(c) The cautious fitting approach applied to the 6-qubit RB data obtained on
the emulator of IBM Guadalupe, the same data as discussed in Figure
\ref{fig:wide2}. The cutoff is estimated by the TN-based reconstruction
method. A single exponential is fitted through the data in the fitting
interval.}
\end{figure}

We considered a variety of heuristic methods for estimating mixing lengths based on experimental data and in this section we discuss three of these protocols. First, we will explain the three methods and then we will show the results we obtain when applying these methods to real RB data.

\subsubsection{\texorpdfstring{Method one: cutoff based on $\chi^2$}{chi2}}
An intuitive statistical approach for determining the start of the fitting interval is to look for the point from which the quality of fit using a single exponential is not significantly worse than a multi-exponential fit. To determine the appropriate starting point for the fitting interval, we perform both a single-exponential and a double-exponential fit for all possible fitting starting points and compare the quality of the fits using a $\chi^2$-metric defined for log-space \begin{equation}\label{eq:chi}
    \chi^2=\sum_i\frac{y_i(\log{y_i}-\log{\hat{y}_i})^2}{\sigma_i^2},
\end{equation}
where $y_i$ is the $i$-th datapoint and $\hat{y}_i$ is the corresponding point predicted by the fit. We define the cutoff as the smallest starting point for which the ratio $r$ of the single-exponential and double-exponential fit obeys $1-\epsilon\leq r \leq 1+\epsilon$, for a small positive cutoff parameter $\epsilon$.
\begin{figure*}[htbp]
    \centering

    \newlength{\plotW}
    \newlength{\plotH}
    \newlength{\gapX}
    \newlength{\gapY}
    \newlength{\labelSpace}
    \newlength{\boxPad}

    \setlength{\plotW}{5cm}
    \setlength{\plotH}{4cm}

    \setlength{\gapX}{0.20cm}

    \setlength{\gapY}{0.30cm}

    \setlength{\labelSpace}{0.40cm}

    \setlength{\boxPad}{0.20cm}

    \newlength{\rowPitch}
    \setlength{\rowPitch}{%
        \dimexpr\plotH+\labelSpace+\gapY\relax
    }

    \begin{tikzpicture}

        \node[
            font=\bfseries,
            anchor=south
        ] at (2.50cm,0.30cm)
        {Cutoff comparison};

        \node[
            font=\bfseries,
            anchor=south
        ] at (7.70cm,0.30cm)
        {Big dataset regime};

        \node[
            font=\bfseries,
            anchor=south
        ] at (12.90cm,0.30cm)
        {Small dataset regime};

        \node[
            anchor=north west,
            inner sep=0pt
        ] (A) at (0,0)
        {%
            \begin{minipage}[t]{5cm}
                \centering
                \includegraphics[
                    width=5cm,
                    height=4cm
                ]{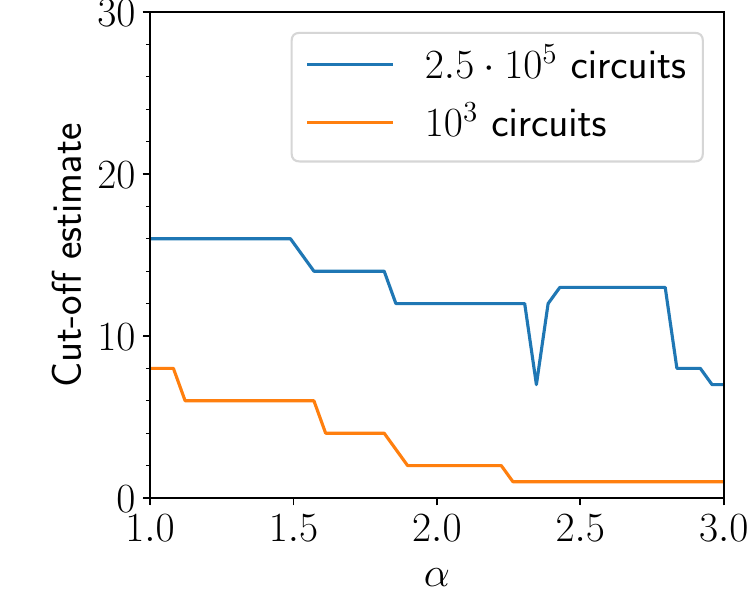}

                \vspace{0.06cm}

                \footnotesize (a)
            \end{minipage}%
        };

        \node[
            anchor=north west,
            inner sep=0pt
        ] (B) at (5.20cm,0)
        {%
            \begin{minipage}[t]{5cm}
                \centering
                \includegraphics[
                    width=5cm,
                    height=4cm
                ]{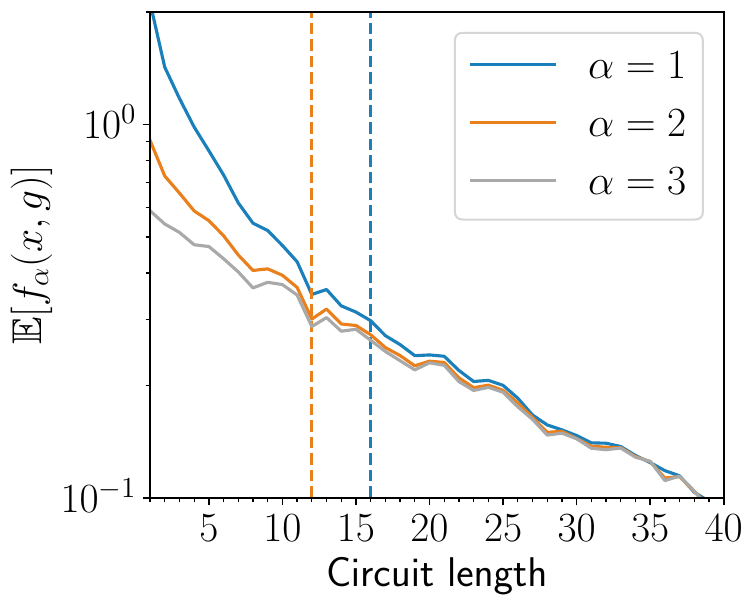}

                \vspace{0.06cm}

                \footnotesize (b)
            \end{minipage}%
        };

        \node[
            anchor=north west,
            inner sep=0pt
        ] (C) at (10.40cm,0)
        {%
            \begin{minipage}[t]{5cm}
                \centering
                \includegraphics[
                    width=5cm,
                    height=4cm
                ]{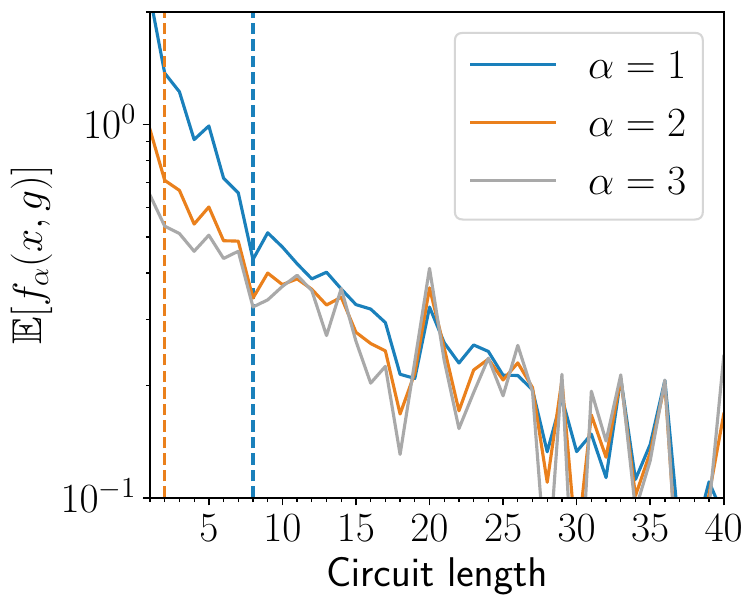}

                \vspace{0.06cm}

                \footnotesize (c)
            \end{minipage}%
        };

        \node[
            anchor=north west,
            inner sep=0pt
        ] (D) at (0,-\rowPitch)
        {%
            \begin{minipage}[t]{5cm}
                \centering
                \includegraphics[
                    width=5cm,
                    height=4cm
                ]{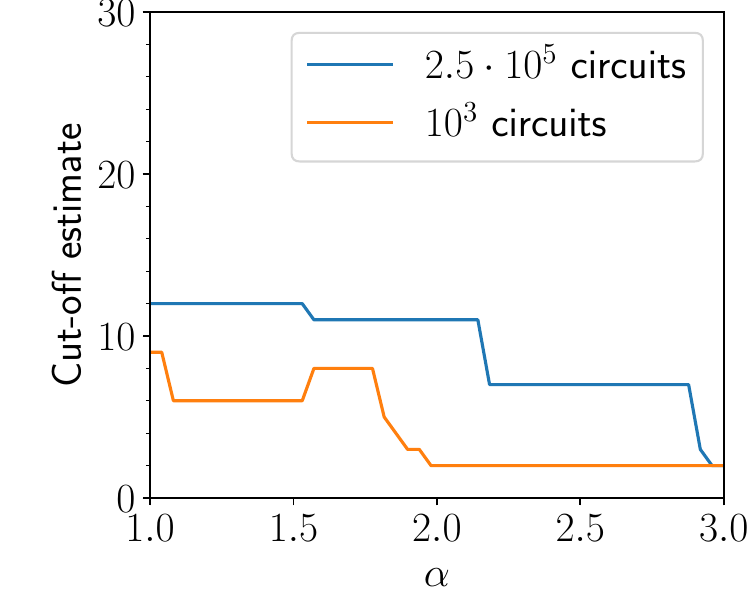}

                \vspace{0.06cm}

                \footnotesize (d)
            \end{minipage}%
        };

        \node[
            anchor=north west,
            inner sep=0pt
        ] (E) at (5.20cm,-\rowPitch)
        {%
            \begin{minipage}[t]{5cm}
                \centering
                \includegraphics[
                    width=5cm,
                    height=4cm
                ]{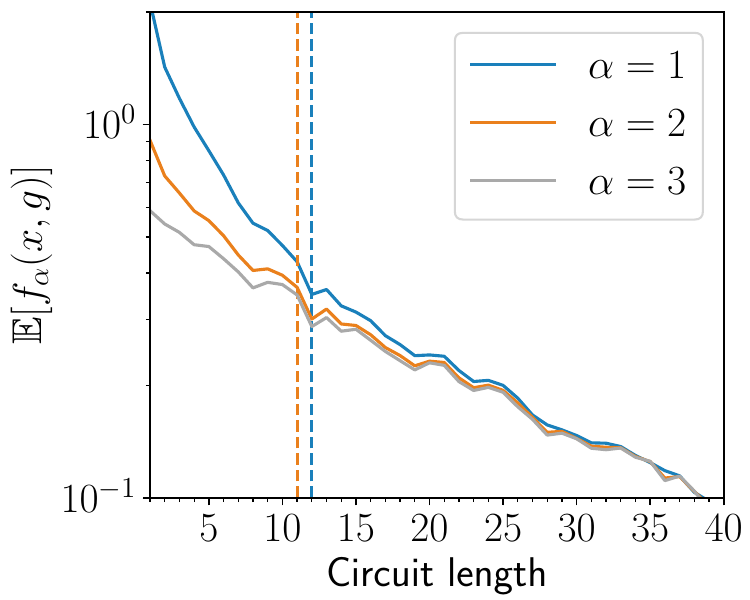}

                \vspace{0.06cm}

                \footnotesize (e)
            \end{minipage}%
        };

        \node[
            anchor=north west,
            inner sep=0pt
        ] (F) at (10.40cm,-\rowPitch)
        {%
            \begin{minipage}[t]{5cm}
                \centering
                \includegraphics[
                    width=5cm,
                    height=4cm
                ]{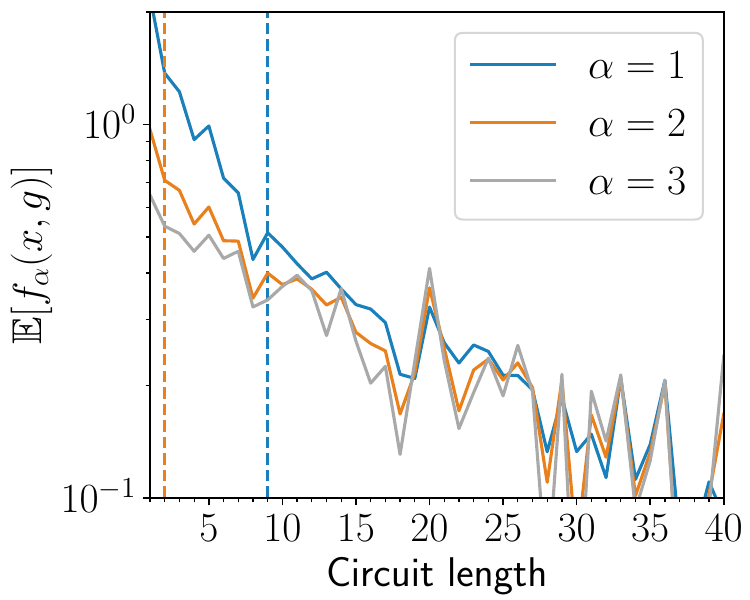}

                \vspace{0.06cm}

                \footnotesize (f)
            \end{minipage}%
        };

        \node[
            anchor=north west,
            inner sep=0pt
        ] (G) at (0,-2\rowPitch)
        {%
            \begin{minipage}[t]{5cm}
                \centering
                \includegraphics[
                    width=5cm,
                    height=4cm
                ]{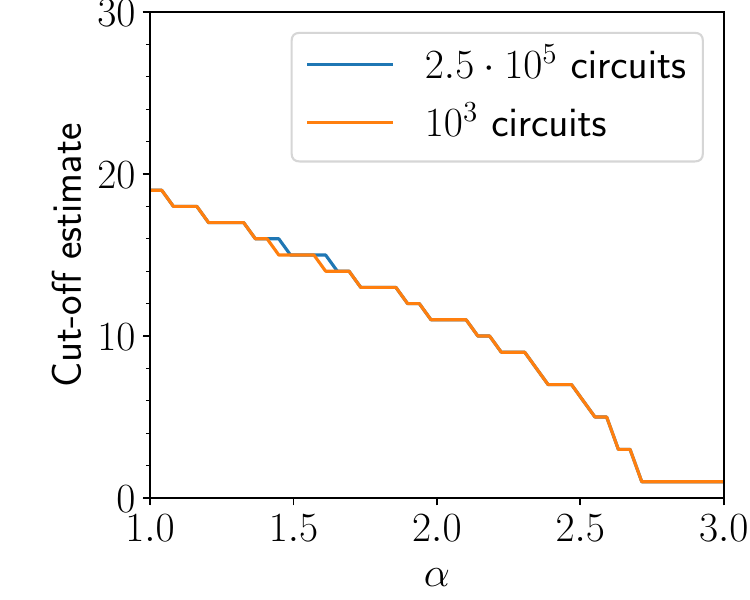}

                \vspace{0.06cm}

                \footnotesize (g)
            \end{minipage}%
        };

        \node[
            anchor=north west,
            inner sep=0pt
        ] (H) at (5.20cm,-2\rowPitch)
        {%
            \begin{minipage}[t]{5cm}
                \centering
                \includegraphics[
                    width=5cm,
                    height=4cm
                ]{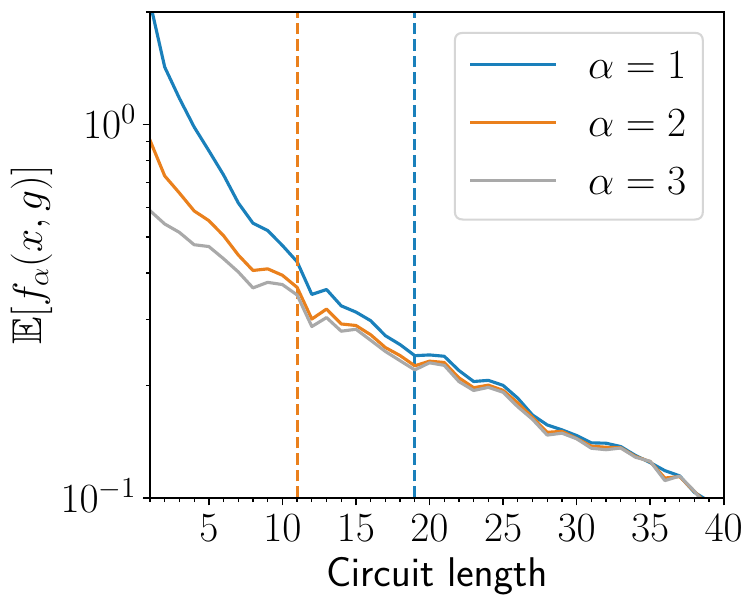}

                \vspace{0.06cm}

                \footnotesize (h)
            \end{minipage}%
        };

        \node[
            anchor=north west,
            inner sep=0pt
        ] (I) at (10.40cm,-2\rowPitch)
        {%
            \begin{minipage}[t]{5cm}
                \centering
                \includegraphics[
                    width=5cm,
                    height=4cm
                ]{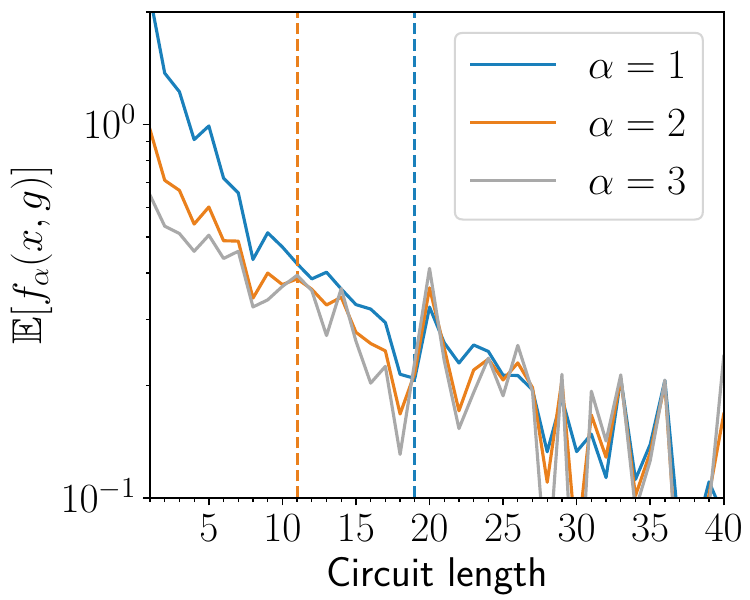}

                \vspace{0.06cm}

                \footnotesize (i)
            \end{minipage}%
        };

        \node[
            font=\bfseries,
            anchor=center
        ] at (-0.60cm,-2.20cm)
        {\rotatebox{90}{$\mathbf{\chi^2}$}};

        \node[
            font=\bfseries,
            anchor=center
        ] at (-0.60cm,{-\rowPitch-2.20cm})
        {\rotatebox{90}{SVD Hankel matrix}};

        \node[
            font=\bfseries,
            anchor=center
        ] at (-0.60cm,{-2\rowPitch-2.20cm})
        {\rotatebox{90}{TN reconstruction}};

        \draw[
            rounded corners=4mm,
            line width=0.9pt
        ]
        (-0.20cm,0.20cm)
        rectangle
        (15.60cm,
         {-3\rowPitch+\gapY-0.20cm});

        \draw[
            line width=0.55pt
        ]
        (5.10cm,0.20cm)
        --
        (5.10cm,{-3\rowPitch+\gapY-0.20cm});

        \draw[
            line width=0.55pt
        ]
        (10.30cm,0.20cm)
        --
        (10.30cm,{-3\rowPitch+\gapY-0.20cm});

        \draw[
            line width=0.55pt
        ]
        (-0.20cm,{-\plotH-\labelSpace-\gapY/2})
        --
        (15.60cm,{-\plotH-\labelSpace-\gapY/2});

        \draw[
            line width=0.55pt
        ]
        (-0.20cm,{-\rowPitch-\plotH-\labelSpace-\gapY/2})
        --
        (15.60cm,{-\rowPitch-\plotH-\labelSpace-\gapY/2});

    \end{tikzpicture}

    \caption{\label{fig:method_comparison}
        Comparison of different methods to determine the cutoff point. For each method, we show the results when applied to two different RB curves. The RB curves originate from the 6-qubit Rigetti experiment with the icosahedral ensemble described in the main text. We use the results from both an experiment with $2.5 \cdot 10^5$ circuits per sequence length and from an experiment with $10^3$ circuits per sequence length to investigate the robustness of the different methods. In panels (a,d,g), we show how the cutoff points depend on the filter function. In panels (b,e,h) we visualize the cutoff points for $\alpha=1$ and $\alpha=2$ for the experiment with $2.5 \cdot 10^5$ circuits and we do the same in panels (c,f,i) for the experiment with $10^3$ circuits per sequence length.
    }

\end{figure*}

\subsubsection*{Method two: cutoff based on SVD of the Hankel matrix}
Matrix pencil methods form a specialized group of algorithms that aim to approximate a function by a sum of complex exponentials \cite{potts2013parameter,sarkar1995using}. These methods rely on the Hankel matrix of the signal and it is well known that the singular values of this Hankel matrix describe the relative importance of the different exponential decays that are reconstructed (see Ref. \cite{sarkar1995using}). Typically, there will be some dominant singular values and a large group of non-dominant singular values that are all of similar magnitude, indicating that these originate from noise in the signal. Therefore, a natural way to determine the cutoff point such that the remaining signal can be well described by a single-exponential decay is to find the point after which there is only a single dominant singular value. We define the start of the fitting interval as the first point for which the ordered singular values $\sigma_1,\sigma_2,\hdots$ of the Hankel matrix obey $ \frac{\log{\sigma_2}}{\log{\sigma_3}}\geq 1-\delta$, for a small positive cutoff parameter $\delta$. 
\subsubsection*{Method three: cutoff based on tensor network reconstruction}
For specific types of circuits and noise models, we can simulate the RB signal using tensor networks (see Appendix \ref{appendix:signal_sim}). Given such a simulated RB curve, it is easy to determine the cutoff point, since this curve can be calculated exactly and can be extended to sufficient length using the simulations. We propose to use a search algorithm to find a noise model for which the $\chi^2$ in log-space (see Equation \ref{eq:chi}) between the simulated curve and the experimental data is minimal. We restrict the noise model to depolarizing noise, which means that we can use a simple algorithm, based on the curve-fitting package in scipy, to minimize this metric and find the best fit. This approximate noise model is then used to simulate the RB signal, from which we can easily and accurately extract the cutoff points by comparing the singular values of the Hankel matrix (see previous section). An application of this method to data from an emulation of the IBM Guadalupe device is shown in Figure \ref{fig:heuristic}. 

\subsubsection{Comparison}
We compare the described methods for determining the cutoff in Figure \ref{fig:method_comparison}. We apply the methods to data from the 6-qubit Rigetti experiment using the icosahedral ensemble described in the text. Ideally, the methods should be able to find a reasonable cutoff even in a low signal-to-noise regime, which is why we show results for both the full dataset ($2.5 \cdot 10^5$ circuits per sequence length) and a subset of the data ($10^3$ circuits per sequence length). For each method we show the estimated cutoff points for both datasets and we show examples of cutoff points overlaid with the RB signals. We notice that the $\chi^2$ and singular value method break down when provided with a noisier signal and have a tendency to underestimate the mixing time. These methods are pure statistical methods (with the singular value method being more specialized towards exponential signals and the $\chi^2$ method being more generally applicable) and can not be reliably used when there is not enough data. When solely provided with the plots in Figure \ref{fig:method_comparison}b and Figure \ref{fig:method_comparison}e, the cutoffs might seem reasonable, but the plots in Figure \ref{fig:method_comparison}c and Figure \ref{fig:method_comparison}f show that this is not the case. Hence substantial care must be taken when using these methods in a situation with small amounts of available data.

The TN-based method is heuristic, but incorporates more knowledge about the mixing behavior of the circuits used in the experiment and the estimated cutoffs are observed to be more resistant to noisy data. Caution is still required, but we consider the tensor network based method to be the best and most stable option for the experiments that we performed.

\section{Tetrahedral ensemble}
In this appendix we show the results for the experiment performed on the Rigetti device using the tetrahedral ensemble. 
\begin{figure*}[h]
\refstepcounter{figure} 
\setcounter{mycounterb}{0}
        \begin{minipage}{0.9\linewidth}
   \begin{minipage}[b]{5cm}
            \centering
            \includegraphics[width=1\textwidth]{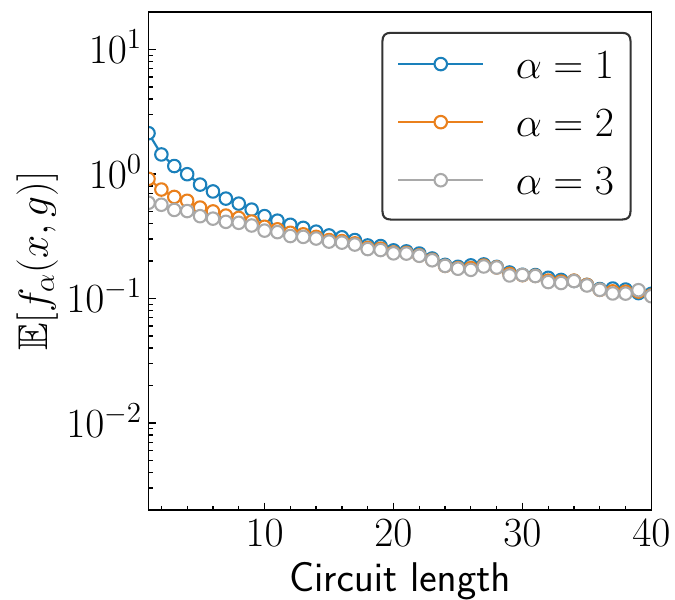}
    \refstepcounter{mycounterb}
    \hspace*{0.75cm}\footnotesize(\themycounterb)
        \end{minipage}
        \hfill
        \begin{minipage}[b]{5cm}  
            \centering 
            \includegraphics[width=1\textwidth]{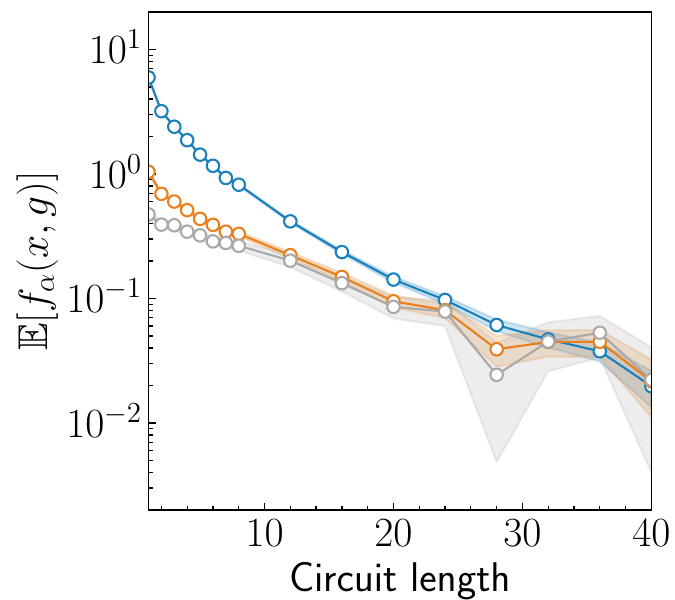}
        \refstepcounter{mycounterb}
    \hspace*{0.75cm}\footnotesize(\themycounterb)
        \end{minipage}
        \hfill
        \begin{minipage}[b]{5cm} 
            \centering 
            \includegraphics[width=1\textwidth]{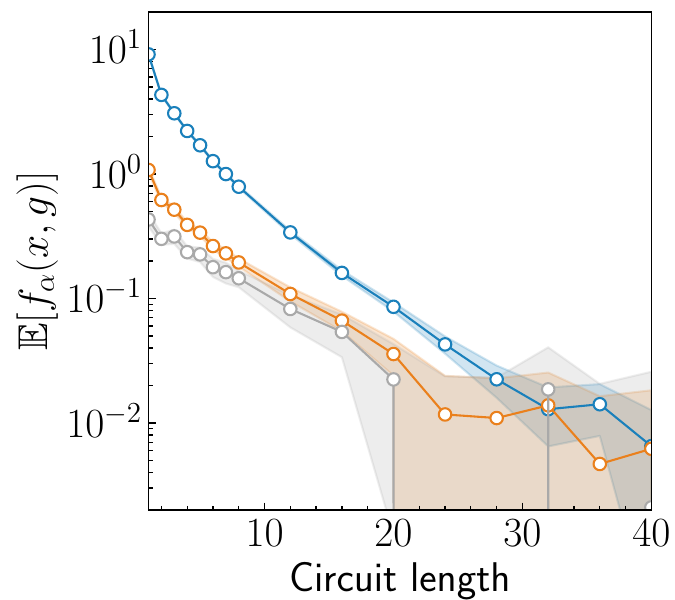}
    \refstepcounter{mycounterb}
    \hspace*{0.75cm}\footnotesize(\themycounterb)
        \end{minipage}
    \end{minipage}
\addtocounter{figure}{-1}
\caption{\label{fig:tetra} Randomized benchmarking experiments performed on the Cepheus-1-36Q Rigetti device. We use random circuits as described in Figure \ref{fig:fig1_explanation}b and show results for the tetrahedral ensemble. Per sequence length $m$, $2.5 \cdot 10^5$ circuits are used (with a single shot per circuit) for the $n=6$ case and $10^5$ circuits for the $n=10/12$ case. For $n=6$, $m \in \{1\hdots 40\}$ and for $n=10/12$ we have $m \in \{1,  2,  3,  4,  5,  6,  7,  8, 12, 16, 20, 24, 28, 32, 36, 40\}$. Uncertainties are estimated using bootstrapping procedures and shaded areas represent 95\%-uncertainty intervals.}
\end{figure*}
\section{IBM Guadalupe simulation}
In this appendix we display for completeness the simulation data generated from the Qiskit emulator of the $6$--qubit Guadalupe device, which was used to generate the relative uncertainty plot vis-a-vis the parameter $\alpha$ in Fig. \ref{fig:figure2}d in the main text. The simulated data tell broadly the same story as the $n=30$ numerical simulations displayed in Fig~\ref{fig:figure2} in the main text and our experimental results discussed in Figure \ref{fig:fig3_rigetti_exp}.

\begin{figure*}[htbp]
\refstepcounter{figure} 
\setcounter{mycounterb}{0}
        \begin{minipage}{0.9\linewidth}
   \begin{minipage}[b]{5cm}
            \centering
            \includegraphics[width=1\textwidth]{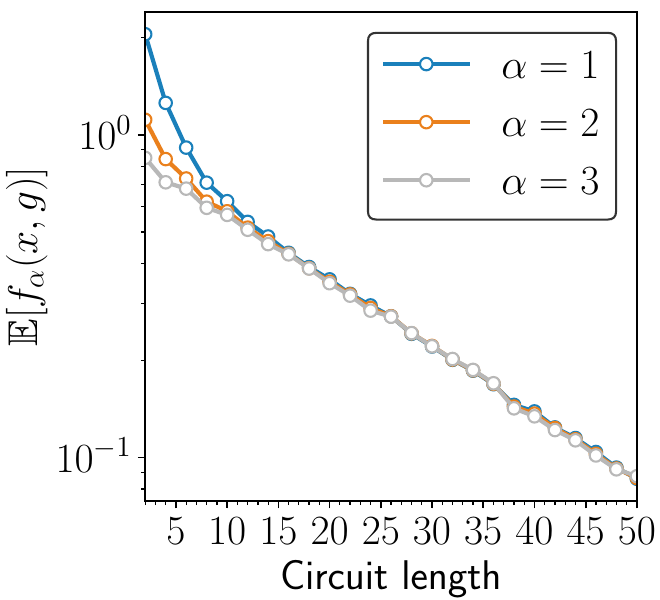}
    \refstepcounter{mycounterb}\label{fig:RBQISKmean}
    \hspace*{0.75cm}\footnotesize(\themycounterb)
        \end{minipage}
        \hfill
        \begin{minipage}[b]{5cm}  
            \centering 
            \includegraphics[width=1\textwidth]{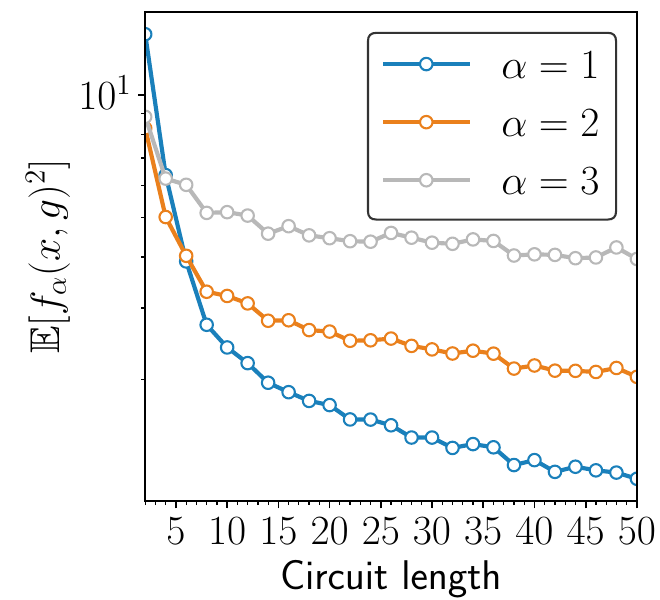}
        \refstepcounter{mycounterb}\label{fig:RBQISKdec}
    \hspace*{0.75cm}\footnotesize(\themycounterb)
        \end{minipage}
        \hfill
        \begin{minipage}[b]{5cm} 
            \centering 
            \includegraphics[width=1\textwidth]{Decay_vs_n6.pdf}
    \refstepcounter{mycounterb}\label{fig:RBQISKvar}
    \hspace*{0.75cm}\footnotesize(\themycounterb)
        \end{minipage}
    \end{minipage}
\addtocounter{figure}{-1}
\caption{\label{fig:wide2} Results for generator RB using brickwork circuits of local 2--qubit Cliffords for a 6--qubit system on the emulator of IBM Guadalupe. For each sequence length, $1.2 \cdot 10^5$ circuits are sampled, with 100 shots per circuit.
Panel (a) shows the signal for a selection of filter functions and panel (b) shows the second moments. The decay values extracted using the two different fitting strategies are shown in panel (c).}
\end{figure*}

\end{document}